\renewcommand{\baselinestretch}{1.4}
\renewcommand{\hat}{\widehat}
\def\singlespace{\def\baselinestretch{1}\@normalsize}
\def\wh{\widehat}
\def\wt{\widetilde}
\newcommand{\cov}{{\rm Cov}}
\newcommand{\tr}{\mbox{tr}}
\newcommand{\var}{\mbox{Var}}
\def\ga{\gamma}
\newcommand{\bA}{{\mathbf A}}
\newcommand{\bB}{{\mathbf B}}
\newcommand{\bF}{{\mathbf F}}
\newcommand{\bH}{{\mathbf H}}
\newcommand{\bI}{{\mathbf I}}
\newcommand{\bK}{{\mathbf K}}
\newcommand{\bL}{{\mathbf L}}
\newcommand{\bQ}{{\mathbf Q}}
\newcommand{\bP}{{\mathbf P}}
\newcommand{\bS}{{\mathbf S}}
\newcommand{\bU}{{\mathbf U}}
\newcommand{\bW}{{\mathbf W}}
\newcommand{\bZ}{{\mathbf Z}}
\newcommand{\ba}{{\mathbf a}}
\newcommand{\bt}{{\mathbf t}}
\newcommand{\bu}{{\mathbf u}}
\newcommand{\bv}{{\mathbf v}}
\newcommand{\bx}{{\mathbf x}}
\newcommand{\by}{{\mathbf y}}
\newcommand{\bz}{{\mathbf z}}
\newcommand{\bfeta}  {\boldsymbol{\eta}}
\newcommand{\bOmega}{\boldsymbol{\Omega}}
\newcommand{\bSigma}{\boldsymbol{\Sigma}}
\newcommand{\bgamma}{\boldsymbol{\gamma}}
\newcommand{\bve}{\mbox{\boldmath$\varepsilon$}}
\newcommand{\bTheta} {\boldsymbol{\Theta}}
\newcommand{\bxi} {\boldsymbol{\xi}}
\newcommand{\bmu} {\boldsymbol{\mu}}
\newcommand{\bGamma} {\boldsymbol{\Gamma}}
\newcommand{\bLambda} {\boldsymbol{\Lambda}}
\newcommand{\bC}{{\mathbf C}}
\newcommand{\bD}{{\mathbf D}}
\newcommand{\calM}{{\mathcal M}}
\def\6bullets{\bullet\bullet\bullet\bullet\bullet\bullet}
\def\AS{{\sl The Annals of Statistics}}
\def\SPA{{\sl Stochastic Processes and Their Applications}}
\def\JSPI{{\sl Journal of Statistical Planning and Inference}}
\def\JOE{{\sl Journal of Econometrics }}
\newcommand{\Date}[1]{\def\@Date{#1}}
\def\today{\number\day~\ifcase\month\or
 January\or February\or March\or April\or May\or June\or
 July\or August\or September\or October\or November\or December\fi~\number\year}
\newtheorem{theorem}{Theorem}
\newtheorem{lemma}[theorem]{Lemma}
\newtheorem{remark}{Remark} 
\newcommand{\nn}{\nonumber}
\newcommand{\be}{\begin{array}}
\newcommand{\ee}{\end{array}}
\newcommand{\ignore}[1]{}{}
\newcommand{\vp}{\varepsilon}
\newcommand{\boldX}{{\mbox{\boldmath $X$}}}
\newcommand{\boldeta}{{\mbox{\boldmath $\eta$}}}
\newcommand{\boldmu}{\mbox{\boldmath $\mu$}}
\newcommand{\bolde}{\mbox{\boldmath $\varepsilon$}}
\newcommand{\boldx}{\mbox{\boldmath $x$}}
\newcommand{\boldZ}{\mbox{\boldmath $Z$}}
\newcommand{\boldxi}{\mbox{\boldmath $\xi$}}
\newcommand{\bfe}{\mbox{\boldmath $e$}}
\newcommand{\idv}{\mathbf{1}}
\renewcommand{\tilde}{\widetilde}
\newcommand{\beqn}{\begin{eqnarray}}             
\newcommand{\eeqn}{\end{eqnarray}}               
\newcommand{\beq}{\begin{eqnarray*}}             
\newcommand{\eeq}{\end{eqnarray*}}
\begin{document}

\title{\bf Identifying Cointegration by Eigenanalysis\footnote{
Rongmao Zhang is Professor, School of Mathematics, Zhejiang University,
Hangzhou, China (E-mail: rmzhang@zju.edu.cn).
Peter Robinson is Tooke Professor of Economic Science and Statistics,
Department of Economics, London School of Economics, UK (E-mail: p.m.robinson@lse.ac.uk).
Qiwei Yao is Professor,
Department of Statistics, London School of Economics, Houghton Street, London,
WC2A 2AE, U.K.; Guanghua School of Management, Peking University,
China (E-mail: q.yao@lse.ac.uk). Partially supported by
the NSFC research grants 11371318/11771390, the ZPNSFC research grant LR16A010001, the ESRC research grant ES/J007242/1, and
 the EPSRC research grant EP/L01226X/1.}}

\author{ Rongmao Zhang \qquad Peter Robinson \qquad
Qiwei Yao\\[2ex]
 }
 \date{}

\maketitle

\begin{abstract}
We propose a new and easy-to-use method for identifying cointegrated
components of nonstationary time series, consisting of an eigenanalysis for a
certain non-negative definite matrix.  Our setting is model-free, and we
allow the integer-valued integration orders of the
observable series to be unknown, and to possibly differ.  Consistency of estimates of the
cointegration space and cointegration rank is established both when the
dimension of the observable time series is fixed as sample size increases,
and when it diverges slowly.
The proposed methodology is also extended and justified
in a fractional setting.
A Monte Carlo study of finite-sample performance, and a small empirical
illustration, are reported.
\end{abstract}


\vskip3mm
\begin{quote}
\noindent
{\sc Keywords}:  Cointegration, Eigenanalysis, $I(d)$,
Nonstationary processes,  Vector time series.
\end{quote}

\newpage

\section{Introduction}

Cointegration entails a dimensionality reduction of certain observable
multiple time series that are dominated by common components. In particular
a multiple time series can be said to be (linearly) cointegrated if there
exists an instantaneous linear combination, or cointegrating error, with
lower integration order. Much of the vast literature, following Box and Tiao
(1977), Granger (1981), Engle and Granger (1987), has focused on unit root
series which have one or more short memory cointegrating errors, but there
have been extensions to nonstationary series with other integer orders of
integration, allowing also for the possibility of some nonstationary
cointegrating errors, as well as to fractional nonstationary, and even
stationary, observable series and cointegrating errors, with unknown
integration orders.  Much of the early literature, in particular, assumed a
complete parameterization of second order properties, where in particular
the observable series are generated from short memory inputs that have
finite autoregressive moving average (ARMA) structure, but it has also been
common to study semiparametric settings, with underlying short memory inputs
having nonparametric autocorrelation, see e.g. Stock (1987), Phillips
(1991), in some cases without sacrificing precision relative to a correctly
specified parametric structure.

Given knowledge of the cointegration rank, $r,$ of a $p$-dimensional
observable series, that is the number of cointegrating relations, various
methods are available for estimating the unknown parameters of the model,
such as the coefficients of the cointegrating errors, and even of unknown
integration orders, and for carrying out asymptotically valid, and sometimes
even efficient, statistical inference. However, $r$ might not be known to
the practitioner, and various approaches for estimating $r$ from the data
have been developed, starting from Engle and Granger (1987), Johansen
(1991), in their parametric, unit root vector autoregressive (VAR) setting,
\ and continuing with, for example, Aznar and Salvador (2002) and Saikkonen
and L\"utkepohl (2000).  If, however, the order of the VAR is underspecified,
or all observable series do not have a single unit root, then typically the
resulting specification error will invalidate such approaches, not to
mention rules of statistical inference on unknown coefficients in the model.
It is possible that one or more of the nonstationary observable processes
could have two or more unit roots, or indeed could have fractional orders of
integration, as supported by some empirical investigations.  References
that allow for nonparametric autocorrelation and/or unknown integration
orders include Phillips and Ouliaris (1988, 1990), Bierens (1997), Stock (1999), Shintani
(2001), Harris and Poskitt (2004), Li, Pan and Yao (2009) in the case of
integer integration orders, and Robinson and Yajima (2002),  Chen and
Hurvich (2006),  Robinson (2008) in case of fractional integration
orders, including in the latter setting cases where observables are
stationary and the cointegrating errors are stationary with less memory.

Like  Phillips and Ouliaris (1988), Robinson and Yajima (2002), Harris and
Poskitt (2004), Li, Pan and Yao (2009), we employ methods based on eigenanalysis.
In our case, in the setting of nonparametric
autocorrelation and unknown (and possibly different) integration orders, we
employ eigenvalues of a
certain non-negative definite matrix function of sample autocovariance
matrices of the observable series, 
for estimating
cointegration rank, with the cointegration space then estimated by
selection of eigenvectors, and cointegrating errors thereby proxied.
Though the initial development assumes that observable series have integer
orders and cointegrating errors have short memory, we extend these results
to allow for observables to be fractionally nonstationary, and cointegrating
errors to be fractionally stationary.  In both circumstances we establish
consistency of our estimates of cointegration rank and space with $p$
fixed as the length $n$ of our time series  diverges.  In case of integer
integration orders, we also establish consistency allowing $p$ to diverge
slowly with $n.$

The rest of the paper is organized as follows. The proposed methodology
is presented  in Section 2. Asymptotic theory with integer orders of
integration is developed in Section 3.
 In Section 4, both the proposed method and part of the asymptotic theory
are extended to the fractional case.
Simulations and a small real data  example are reported in Section 5.
All statements and proofs  are relegated to  an Appendix, which also
contains  a number of technical lemmas.

\section{Methods}
\setcounter{equation}{0}

\subsection{Setting}
We call a vector process $\bu_t$ weakly stationary if
(i) $E\bu_t$ is a constant vector independent of $t$, and (ii) $E\|\bu_t\|^2 < \infty$,
and $\cov(\bu_t, \bu_{t+s}) $ depends on $s$ only for any integers $t, s$, where
$\| \cdot \|$ denotes the Euclidean norm.
Denote by $\nabla$ the difference
operator, i.e. $\nabla \bu_t = \bu_t-\bu_{t-1}$, and $\nabla^d \bu_t =
\nabla(\nabla^{d-1} \bu_t)$ for any integer $d\ge 1$. We use the convention
$\nabla^0 \bu_t = \bu_t $. Further, if $\mathbf{u}_{t}$ has spectral density matrix that is finite and
positive definite at zero frequency we say $\mathbf{u}_{t}$ is an $I\left(
0\right) $ process.  An example of an $I\left( 0\right) $ process is a
stationary an invertible vector ARMA, and many $I\left( 0\right) $ processes
satisfy Condition 1 of Section 3.1 below, imposed for our asymptotic theory,
including the examples described immediately after Condition 1.  Now denote
by $u_{it}$ the $i$th element of $\mathbf{u}_{t}$ and define $%
u_{it}^{+}=u_{it}1\left( t\geq 1\right) ,$ where $1\left( \cdot\right) $ is the
indicator function. For an $m$-dimensional $I\left( 0\right) $ process $%
\mathbf{u}_{t}$ and non-negative integers $d_{1},...,d_{m},$ we say that $%
\mathbf{v}_{t}$ $=\left( \nabla ^{-d_{1}}u_{1t}^{+},...,\nabla
^{-d_{m}}u_{mt}^{+}\right) ^{\prime }$ is an ($m$-dimensional) $I\left(
d_{1},...,d_{m}\right) $ process, with some abuse of notation when $m=1,$ $%
d_{1}$ $=0$.  Note that for $d_{1}=...=d_{m}=0,$ $\mathbf{v}_{t}$ is not $%
I\left( 0\right) $ or even weakly stationary or equivalent to $\mathbf{u}_{t}
$ due to the truncation (implying $\mathbf{v}_{t}=0,$ $t\leq 0$) that is
imposed in order to achieve bounded variance in case of positive $d_{i},$
but it is `asymptotically' weakly stationary and $I\left( 0\right) .$
When $d_{1}=...=d_{m}=1,$ all elements of $\mathbf{v}_{t}$ have a single
unit root,  but we are concerned with processes for which $d_{i}$ can vary
over $i.$ \ \

Now assume a $p\times 1$ \ observable time series \ $\mathbf{y}_{t}$ is $%
I\left( d_{1},...,d_{p}\right) $ for non-negative integers, and
 admits the  form
\begin{equation}
\label{2.1}
\bold{y}_t=\bA\bold{x}_t,
\end{equation}
where $\bA$ is an unknown and invertible constant matrix, $\bold{x}_t=(\bold{x}'_{t1},
\bold{x}'_{t2})'$ is a latent $p\times 1$ process,  $\bold{x}_{t2}$ is an
 $r\times 1$  $I(0)$ process, and $\mathbf{x}_{t1}$ \ is an $I\left( c_{1},...,c_{p-r}\right) $ process,
where each $c_{i}$ is an element of the set $\left\{ d_{1},...,d_{p}\right\}
.$
Furthermore  no linear combination of $\bx_{t1}$ is $I(0)$,
as such a stationary variable can be absorbed into $\bx_{t2}$.
Each component of $\bx_{t2}$ is a cointegrating error  of $\by_t$
and $r\ge 0$ is the cointegration rank.
In the event that there exists no cointegration among the components of $\by_t$,
$r=0$. When $\by_t$ itself is $I(0, \cdots, 0)$, $r=p$.
But these are two extreme cases. Note that cointegration requires equality of at least two $d_i.$ For many economic and financial applications,
there exist a small number of cointegrated variables, i.e. $r \ge 1$ is a small
integer.

The pair $(\bA, \bold{x}_t)$
 in (\ref{2.1})
is not uniquely defined,
as it can be replaced by $(\bA\bH^{-1}, \bH\bold{x}_t)$ for any
invertible $\bH$ of the form
\beqn \left(\begin{array}{cc} \bH_{11} &
\bH_{12}\\
\bf{0} & \bH_{22}
\end{array}\right)\nn\eeqn
where $\bH_{11}, \bH_{22}$ are square matrices of size $(p-r), \,r$
respectively, and $\bf{0}$ denotes a matrix with all entries equal to
$0.$
Therefore there is no loss of generality in assuming $\bA$ to be
orthogonal, because any non-orthogonal $\bA$ admits the
decomposition $\bA=\bQ\bU,$ where $\bQ$ is  orthogonal and $\bU$ is
upper-triangular, and we may then replace $(\bA, \bold{x}_t)$
in (\ref{2.1}) by $(\bQ, \bU\bold{x}_t)$. In the sequel, we always assume
that $\bA$ in (\ref{2.1}) is orthogonal, i.e., $\bA'\bA=\bI_p$, where
$\bI_p$ denotes the $p\times p$ identity matrix. Write \beqn \bA=(\bA_1, \bA_2),\nn\eeqn
where $\bA_1$ and $\bA_2$ are respectively, $p\times (p-r)$ and $p\times
r$ matrices. As now $\bx_{t2} = \bA_2'\by_t$, the linear space spanned by
the columns of $\bA_2$, denoted by
$\mathcal{M}(\bA_2)$, is called the \textsl{cointegration space}. In fact this
cointegration space is uniquely defined by (\ref{2.1}), though $\bA_2$ itself is not.

To highlight the key idea of the new approach, we only consider in this section
and also Section 3 below the cointegration with $\bx_{t2}\sim I(0)$.
The extension of our method to the cases when $\bx_{t2} \sim I(d)$ with $0< d< \min_{1\le j\le p} d_j$ are presented in Section 4 which also allows $d_j$'s and $d$ to be
fractional numbers.

\subsection{Estimation}

The goal is to determine the cointegration rank $r$ in (\ref{2.1}) and to
identify $\bA_2$, or more precisely $\mathcal{M}(\bA_2)$.
Then
$\mathcal{M}(\bA_1)$ is the orthogonal complement of $\mathcal{M}(\bA_2)$,
and $\bold{x}_{it}=\bA'_i\by_t$ for $i=1, 2.$
Our estimation method is motivated by the following observation.
For $j\geq 0$, let
\beqn \hat{\bSigma}_{j}={1\over
n}\sum_{t=1}^{n-j}(\bold{y}_{t+j}-\bar{\bold{y}})(\bold{y}_t-\bar{\bold{y}})',
\quad \quad \bar{\bold{y}}={1\over n}\sum_{t=1}^{n}\bold{y}_t.\nn\eeqn
For any $\ba\in \mathcal{M}(\bA_2), \, \ba'\hat{\bSigma}_{j}\ba$ is the
sample autocovariance function at lag $j$ for the weakly stationary
univariate time series $\ba'\by_t$, and it
converges to a finite constant (i.e. the autocovariance function of $\ba'\by_t$ at lag $j$)
almost surely under some mild conditions. However for any $\ba\notin
\mathcal{M}(\bA_2), \, \ba'\by_t$ is   $I(d)$ for some $d\geq 1$,
and
\beqn\label{2.5}\ba' \hat{\bSigma}_{j} \ba=O_{e}(n^{2d-1}) \quad \,
\hbox{or} \quad \, O_{e}(n^{2d}), \eeqn
depending on whether $E(\ba'\by_t) =0$ or not,
see
Theorems 1 $\&$ 2 of Pe\~{n}a and Poncela (2006).
In the above expression,
   $U=O_{e}(V)$ indicates that $P(0 <  |U/V|<\infty ) \to 1$.
Hence intuitively the
$r$ directions in the cointegration space $\mathcal{M}(\bA_2)$ make
$|\ba'\hat{\bSigma}_{j} \ba|$ as small as possible for all $j\geq 0$.

To combine  information over different lags,  define
\begin{equation} \label{2.6}
\hat{\bW}=\sum_{j=0}^{j_0}\hat{\bSigma}_j\hat{\bSigma}'_j,
\end{equation}
 where $j_0\geq 1$ is  a prespecified and fixed  integer  with respect to $n$ throughout.
We use the product $\hat{\bSigma}_j\hat{\bSigma}'_j$ instead of
$\hat{\bSigma}_j$ to ensure  each term in the sum is non-negative
definite, and that there is no information cancellation over different
lags.
Note that $\ba'\hat{\bSigma}_j\ba=O_e(1)$ if $\ba\in
\mathcal{M}(\bA_2)$,  and is at least
of the  order of  $n^{2d-1}$ if $\ba\in
\mathcal{M}(\bA_1)$,
where $d$ is the minimum integration order of the
components $\bx_{t1}$.
It can be shown that the $(p-r)$ largest eigenvalues of $\wh \bW$ are at
least of the order $n^{2d-1}$, while the other $r$ eigenvalues are
$O_e(1)$ (see (\ref{6.38}), (\ref{6.39}) below).
Hence intuitively $\calM(\bA_2)$ can be estimated by the linear space
spanned by the $r$ eigenvectors of
$\wh \bW$ corresponding to the $r$ smallest eigenvalues, and
$\calM(\bA_1)$  can be estimated by that
spanned by the $(p-r)$ eigenvectors of
$\wh \bW$ corresponding to the $(p-r)$ largest eigenvalues.

Let $(\hat{\bgamma}_1, \cdots, \hat{\bgamma}_p)$
be the orthonormal eigenvectors of $\hat{\bW}$ corresponding to the
eigenvalues arranged in  descending order.
Define
\begin{equation} \label{Ax}
 \hat{\bA}=(\hat{\bA}_1, \hat{\bA}_2), \quad \,
\hat{\bold{x}}_{t1}=\hat{\bA}'_1 \bold{y}_t \quad  \hbox{and}
\quad \wh \bx_{t2}= \hat{\bA}'_2 \bold{y}_t.
\end{equation}
Then $\calM(\wh\bA_1)$ and $\calM(\wh\bA_2)$,  the linear
spaces spanned by the eigenvectors of $\wh \bW$, are consistent estimators for
$\calM(\bA_1)$ and $\calM(\bA_2)$ respectively; see Theorem~\ref{them1} below.

The idea of using an eigenanalysis based on a quadratic form of sample
autocovariance matrices has been used for factor modelling for dimension
reduction (Lam and Yao 2012, and references within), and for segmenting a
high-dimensional time series into several both contemporaneously and
serially uncorrelated subseries (Chang et al. 2017).
One distinctive advantage of using the quadratic form $\wh \bSigma_j \wh \bSigma_j'$ instead
of $\wh \bSigma_j$ in (\ref{2.6})
is that there is no  information cancellation over different lags.
Therefore this approach is insensitive to the choice
of $j_0$ in (\ref{2.6}).
Often small values such as $j_0=5$ are sufficient
to catch the relevant characteristics, as serial dependence is usually
most predominant at small lags.
Using different values of $j_0$ hardly changes the results;
 see Table \ref{tab5} in Section 5 below, and also
Lam and Yao (2012) and Chang et al. (2017).

\subsection{Determining cointegration ranks}

The components of $\wh \bx_t = \wh \bA' \by_t
\equiv (\wh x_t^1, \cdots, \wh x_t^p)'$,
defined in (\ref{Ax}), are arranged according to  descending order of
the eigenvalues of $\wh \bW$.
  Therefore, the order of the components reflects inversely the
closeness to stationarity of the component series, with $\{\wh x_t^p \}$ most
likely being stationary, and $\{\wh x_t^1\}$ most likely
being  $I(d)$ with  largest possible integer $d\ge 1$.
 Let $S_i(m)=\sum_{k=1}^{m}\wh\rho_i(k)$, where
$\wh\rho_i(\cdot)$ is the sample autocorrelation function (ACF) of $\wh x_t^i$ defined as
 $$\wh\rho_i(k)=\Big({1\over {n-k}}\sum_{t=1}^{n-k}(\wh x_{t+k}^i-\overline{\wh x}^i)(\wh x_{t}^i-\overline{\wh x}^i)\Big)
\Big/\Big({1\over n}\sum_{t=1}^{n}(\wh x_{t}^i-\overline{\wh x}^i)^2\Big), \quad
i=1, 2, \cdots, p,$$
where $\overline{\wh x}^i=\sum_{t=1}^{n}\wh x_{t}^i/n.$
When $\wh x_{t}^i$ is stationary and suitable additional conditions hold,  $\lim_{m\rightarrow\infty}S_i(m)<\infty$ in probility, however, when $\wh x_{t}^i$ is non-stationary,
$\wh\rho_i(k)\rightarrow 1$ in probability for any fixed $k$. Hence
$\lim_{m\rightarrow\infty}S_i(m)=\infty.$ Therefore,
 we can estimate the cointegration rank $r$  by
\beqn \label{ACF}\wh r=\sum_{i=1}^{p}I\{S_i(m)/m < c_0\}
\label{new1}
\eeqn
for some constant $0<c_0<1$ and large $m$.  For a
classical
stationary ARMA time series,
the autocorrelation $\rho_i(k)$ decays exponentially, i.e., there
exists a  $\rho\in (0,1)$ such that $\rho_i(k)=O(\rho^k)$.  Hence it is
usually sufficient to use a moderate $m$ in (\ref{new1}).
In our numerical experiments
reported in Section 5, we always set $c_0 =0.3$ and $m=20$, and the
estimator $\wh r$ performs very well and is robust across the different settings.

\begin{remark} For unit-root processes, $\wh r$ defined in (\ref{new1})
typically takes the value 0 with probability  approaching  1.
To appreciate this, let $y_t=y_{t-1}+\varepsilon_t$ be a unit root
process  and $\wh \rho(k)$ be its sample ACF $\wh \rho(k)=\wh \gamma(k)/\wh \gamma(0),$ where
\[\wh \gamma(i)={1\over n}\sum_{t=1}^{n-i}(Y_{t}-\overline{Y})(Y_{t+i}-\overline{Y}), \quad  \overline{Y}=\sum_{i=1}^{n}Y_i/n.\]
Under some regularity conditions on
$\varepsilon_t$, similar to those in Theorem 1 of Bierens (1993), it can be shown that
\beqn \label{remark1}{n\over {m+1}}\Big({{\sum_{{\color{red}k}=1}^m \wh \rho(k)}\over m}-1\Big)\stackrel{d}{\longrightarrow}
-{{(W(1)-\int_{0}^{1}W(t)\, dt)^2+(\int_{0}^{1} W(t)\, dt)^2+d_m}\over {4[\int_{0}^{1}W^2(t)\, dt-(\int_{0}^{1}W(t)\, dt)^2]}},\eeqn
where \[d_m={1 \over \sigma^2}\Big(c(0)+2\sum_{i=1}^{m-1}{{(m-i)(m-i+1)}\over
m(m+1)}c(i)\Big), \quad c(i)=\rm{cov}(\vp_0, \vp_i),   \, \, \,
\, \sigma^2=\lim_{n\rightarrow\infty}{1\over
n}\mathrm{E}\Big(\sum_{s=1}^{n}\varepsilon_s\Big)^2.\] Thus
$\sum_{t=1}^m \wh
\rho(k)/m\stackrel{p}{\longrightarrow} 1$, provided that $n/m$ is large
enough.
\end{remark}

We may also estimate $r$ by  unit-root tests.
For  a given integer $r_0\le 1$,   testing a hypothesis on
 cointegration order
$H_0:  r < r_0$
can be transformed to  testing  a unit-root hypothesis
\begin{align}
 H_0: \wh x_{t}^{p-r_0+1} \sim I(d) \mbox{ for some integer } d \ge 1.
\label{hyper0}
\end{align}
We can apply the test method of Phillips and Ouliaris (1988) to test
 (\ref{hyper0}) as $d$ may be greater than 1. When the null
hypothesis $H_0$ is rejected, we conclude $r$ is at least as large as $r_0$.

\subsection{Estimation for high integration orders}
Let $r_1, \cdots, r_q$ be $q$ positive integers, and $r_1 + \cdots + r_q = p-r$.
Let  $1\le a_1 < \cdots < a_q $ be $q$ integers such that
$\bx_{t1}=(\bx_{t1q}, \cdots, \bx_{t11})=(\bA'_{1q}\by_t, \cdots, \bA'_{11}\by_t)$, where $\bx_{t1j}$ is an $r_j\times 1$ $I(a_j)$ process.
Let
\begin{equation}
\label{A1j}
\wh \bA_1 = ( \wh \bA_{1q}, \cdots, \wh \bA_{11}),
\end{equation}
where $\wh \bA_{1j}$ has $r_{j}$ columns. Then $\wh \bx_{t1j} = \wh
\bA_{1j}' \by_t$ is the estimated component of $\bx_{t1}$  of integration order $a_j$.

Similar to Section 2.3
above, a unit-root test can be adapted to
estimate the sizes $r_1, \cdots, r_q$ and the integration
orders $a_1, \cdots, a_q$.
  We illustrate the
idea below by outlining the steps in estimating $(a_1, r_1)$, they
can be repeated in order to estimate $(a_2, r_2), (a_3, r_3), \cdots$.

For $\wh r$ defined in (\ref{ACF}), let $\wh a_1$ be the minimum integer $d\ge 1$ such
that a unit-root test rejects
$
H_0: \nabla^d \wh x_{t}^{p-\wh r}\, \sim I(1)$ against  $H_1: \nabla^d
\wh x_{t}^{p-\wh r}\, \sim I(0)$.
Then the size $r_1$ can be estimated by applying estimator (\ref{ACF}) to
the $(p-\wh r)\times 1$ series
$\{ \nabla^{\wh a_1} \wh x_t^{j}, \, j=1, \cdots, p- \wh r\}$.

\section{Asymptotic Properties}
\setcounter{equation}{0}

In this section, we investigate the asymptotic properties of  the
proposed statistics.
First, we show that with  $r$ given,
the linear space ${\mathcal{M}}(\wh\bA_2)$
consistently estimate   the cointegration space $\mathcal{M}(\bA_2)$.
We measure the distance between the two spaces by
\begin{equation} \label{3.1}
D(\mathcal{M}(\wh\bA_2), \mathcal{M}(\bA_2))=\sqrt{1-{1\over
r}\mathrm{tr}(\hat{\bA}_2\hat{\bA}'_2 \bA_2 \bA'_2)}.
\end{equation}
Then $D(\mathcal{M}(\wh\bA_2), \mathcal{M}(\bA_2)) \in [0, 1]$, being 0
if and only if ${\mathcal{M}}(\wh\bA_2) ={\mathcal{M}}(\bA_2)$,
and 1 if and only if ${\mathcal{M}}(\wh\bA_2)$ and ${\mathcal{M}}(\bA_2)$ are orthogonal.
Furthermore, we show that  the estimator
$\wh r$, defined in (\ref{ACF}), is
consistent.
We consider two asymptotic regimes: (i) $p$ is fixed while $n\to \infty$,
and (ii) $p \to \infty$ more slowly than  $n$.

Put $\bx_{t1} = (x_{t}^1, \cdots, x_{t}^{p-r})'$. Under (\ref{2.1}),
 $x_{t}^j$ is  $I(d_j)$ for $1\le j \le p-r$ and  $z_{t}^j \equiv\nabla^{d_j} x_{t}^j$ is
I(0), where $d_j \ge 1$ is an integer.
Write $\bz_t = (z_{t}^1, \cdots, z^{p-r}_t)'$ and
$\bve_t = (\bz_t', \bx_{t2}')'$. Denote  the vector of partial sums of components  of $\bve_t$  by
\beqn \bS_n(\bold{t})\equiv (S_n^1(t_1), \cdots, S_n^p(t_p))'
=\Big({1\over
\sqrt{n}}\sum_{l=1}^{[nt_1]}(\varepsilon_l^1-\mathrm{E}\varepsilon_1^1),
\cdots, {1\over
\sqrt{n}}\sum_{l=1}^{[nt_p]}(\varepsilon_l^p-\mathrm{E}\varepsilon_1^p)\Big)',\nn
\eeqn
where $0< t_1 < \cdots < t_p \le 1$ are constants and $\bt =(t_1, \cdots, t_p)'$.

\subsection{When $n \to \infty$ and $p$ is fixed}
\label{sec31}

We introduce a regularity condition first.

\vskip3mm

 {\bf Condition 1}.
 \begin{itemize}
\item[(i)] There exists a Gaussian process $\bold{W}(\bt)=(W^1(t_1),
\cdots, W^p(t_p))'$ such that as $n\rightarrow\infty$,
\beqn \bS_n(\bold{t})\stackrel{J_1}{\Longrightarrow} \bold{W}(\bt), \quad \hbox{on} \, \, \,  D^p(0, 1),\nn\eeqn
 where $\stackrel{J_1}{\Longrightarrow}$ denotes  weak convergence
under Skorohod $J_1$ topology (Chapter 3 in Billingsley 1999),
and  $\bold{W}(\bold{1})$ has a positive definite
covariance matrix $\bOmega=(\sigma_{ij}).$

\item[(ii)] The sample autocovariance matrix of $\bx_{t2}$ satisfies
\beqn \max_{0\leq j\leq j_0}\big\|{1\over n}\sum_{t=1}^{n-j}(\bx_{t+j, 2}
-\bar{\bx}_2)(\bx_{t2}-\bar{\bx}_2)'- \hbox{Cov}(\bx_{1+j,2}, \bx_{1,2})\big\|_2
\stackrel{p}{\longrightarrow} 0,\nn\eeqn
where $\|\bH\|_2=\max_{\|\ba\|=1}\|\bH \ba\|$ is the $L_2$-norm of matrix $\bH$, $\bar{\bx}_2$ is the sample mean of $\bx_{t2}$,
and $\stackrel{p}{\longrightarrow}$ denotes  convergence in probability.

\end{itemize}

Note that our definition of cointegration is formally different from that of
Johansen (1995) which is based on ARIMA framework.
There are some subtle technical differences between the respective
conditions. For example, Condition 1(i) above implies det($\var(\bve_t))\ne 0$
while Johansen's setting allows the ARIMA process driven by a degenerate
innovation process.

In fact, Condition 1 is mild.
It is fulfilled when $\{\bve_t\}$ is weakly stationary with det$(\var(\bve_t))\ne 0$,
$\mathrm{E}\|\bolde_t\|^{2\gamma}< C $ for some constants $\gamma>1$ and
$C< \infty$, and $\{\bve_t\}$ is
also $\alpha$-mixing with
 mixing coefficients $\alpha_m$ satisfying the condition
$\sum_{m=1}^{\infty}\alpha_{m}^{1-1/\gamma}<\infty$; see
Theorem 3.2.3 of Lin and Lu (1997). It is also fulfilled when
$\bolde_t=\sum_{j=0}^{\infty}\bC_j\boldeta_{t-j}$,
where $\bfeta_t$ are i.i.d. with  non-singular covariance matrix and
$E\|\bfeta_t\|^{4\ga} < \infty$ for some constant $\ga>1$,
and  det$( \sum_{j=0}^\infty \bC_j) \ne 0$,
$\sum_{j=1}^\infty  ||\bC_{j}|| < \infty$. See Fakhre-Zakeria and  Lee (2000).
Note that our setting accommodates the cases when $\by_t$ contains
linear deterministic components,
as we allow  $\mathrm{E}(\bve_{t}) \ne 0$.

 \begin{theorem} \label{them1}
Let $r$ be known.
 Under Condition 1, $D(\mathcal{M}(\wh\bA_2), \mathcal{M}(\bA_2)) = o_p(1)$.
Furthermore,
\begin{itemize}
\item [(i)] $D(\mathcal{M}(\wh\bA_2), \mathcal{M}(\bA_2))=O_e(n^{-2a_1+1})$ provided either (a)
$|I_0|\geq 2$ or (b) $ |I_0|=1$ and
$\mathrm{E}z_t^{I_0}= 0$,
\item [(ii)] $D(\mathcal{M}(\wh\bA_2), \mathcal{M}(\bA_2))=O_e(n^{-2a_1})$
provided $|I_0|=1$ and
$\mathrm{E}z_t^{I_0}\neq 0$, and
\item [(iii)]
$D(\mathcal{M}(\wh\bA_{1j}),
\mathcal{M}(\bA_{1j}))=O_e(n^{-2\alpha_j})$ for $j=1, \cdots, q$
provided $\mathrm{E}\bz_t=0,$
 \end{itemize}
where  $ I_0=\{i: x_t^i\sim I(a_1), \, 1\le i \le p-r\}, \, |I_0|$ denotes the number of  elements  in $I_0$,
$\alpha_j=\min\{a_j-a_{j-1}, a_{j+1}-a_{j}\}, \, a_0=1/2$ and
$ a_j, \, j= 1, \cdots, q$ are defined  in Section 2.4.
 \end{theorem}

\begin{remark} When $\mathrm{E}\bz_t\neq 0$, we can express the
components $x_t^i$ of $\bx_{t1}$ as
 \[ (1-B)^{d_i}x_t^i= (z_t^i-\mathrm{E}z_t^i)+\mathrm{E}z_t^i=:\varepsilon_t^i+\mu_i.\]
Hence
 \[x_t^i=(1-B)^{-d_i}\varepsilon_t^i+\mu_i\prod_{l=0}^{d_i-1}(t+l)/(d_i!)
 =:\xi_t^i+\mu_i\prod_{l=0}^{d_i-1}(t+l)/(d_i!).\]
This entails
 $\by_t=\bA \bx_t=\bA (\boldxi'_t, \bx'_{t2})'+ \bB (1, t, t^2, \cdots,
t^{a_q})'$, where $\bB$ is a  $p\times a_q$ matrix. We can estimate
$\bB$ by the least squares method based on $\{\by_t\}$,
and identify the cointegration subspaces spanned by
$\bA_{1j}$ using the
detrending series $\widetilde{\by}_t=\by_t-\widehat{\bB} (1, t, t^2,
\cdots, t^{a_q})'$.
It can then be shown that Theorem 1 (iii) still holds.
 \end{remark}

\begin{theorem}
Under Condition 1,
$\lim_{m\rightarrow\infty}P(\, \wh{r}=r\, )= 1$.
 \end{theorem}

\subsection{When $n\to \infty$, $p\to \infty$ and $ \ p=O(n^c)$}
\label{sec32}

We extend the asymptotic results in the previous section to the cases when
$p\to \infty$ and $p=O(n^c)$ for some $c \in (0, 1/2)$. Technically we
employ a normal approximation
method to establish the results. See Condition 2(i) below.

\vskip3mm

{\bf Condition 2.}
 \begin{itemize}
\item[(i) ] 
Suppose that  there exists an $m$-dimensional vector  $\bfe_t$
with mean zero and independent components such that $\bz_t=\bB\bfe_t$,
where $\bB$ is a $(p-r)\times m$ matrix, $m\geq p-r$ and $\|\bB\|_2<\infty$.
For each component $e_t^i$ of $\bfe_t$, there exists an independent and
standard normal sequence $\{\nu_{t}^i\}$
for which as $n\rightarrow\infty$,
    \beqn \label{3.9}\max_{1\leq i\leq m}\max_{0\leq t\leq 1}\mathrm{E}\Big[\sum_{s=1}^{[nt]}(e_{s}^{i}
    -\sigma_{ii}\nu_{s}^i)\Big]^2=O(n^{2\tau}),  \eeqn
    where $0<\tau<1/2$ is a constant,
 $b_1\leq \sigma_{ii}^2 \equiv
\lim_{n\rightarrow\infty}\mathrm{Var}\left(\sum_{s=1}^{n}e_{s}^{i}\right)/n\leq b_2$ for any $i$, and $b_1, \, b_2$ are two positive constants.

\item[(ii) ] 
The sample autocovariance matrix of $\bx_{t2}$ satisfies
\beqn \max_{0\leq j\leq j_0}\Big\|{1\over n}\sum_{t=1}^{n-j}(\bx_{t+j,
2}-\bar{\bx}_2)(\bx_{t2}-\bar{\bx}_2)'- \hbox{Cov}(\bx_{1+j,2},
\bx_{1,2})\Big\|_2\stackrel{p}{\longrightarrow} 0.\nn\eeqn

\item[(iii) ] 
Suppose $\{\bz_t\}$ and $\{\bx_{t2}\}$ are independent and for $\tau$ given above
    \beqn \max_{p-r< j\leq p}\sum_{s, t=1}^{n}|\mathrm{E}(\varepsilon_s^j\varepsilon_{t}^j)|
    =O(n^{1+2\tau}).\nn\eeqn

\end{itemize}

\begin{remark} The inequalities immediately below (3.2)
holds when all components series of $\bz_t$ are I(0) with spectral
density continuous at zero
frequency. This is guaranteed by the fact that
 their variance  is proportional to the Cesaro sum of the
Fourier series of the spectral density at zero frequency, and thus
converges to the latter (which is positive and finite under I(0)) after
normalization.
\end{remark}

\begin{remark}  The form $\bz_t=\bB\bfe_t$ in Condition 2 (i), has  been
used by Bai and Saranadasa (1996) and Chen and Qin (2010).
Many classic vector time series  including stationary VAR, VARMA  and more generally the linear process
\beqn \bz_t=\sum_{j=0}^{\infty}\bB_{j}\bfe_{t-j}\nn\eeqn
with $\sum_{j=0}^{\infty}\|\bB_j\|_2<\infty$
follow this from.
We require  $m \geq p-r$, which ensures that no linear combination of
$\bz_t$ is $I(0)$.
The assumption on the independence between  $\{\bz_t\}$ and
$\{\bx_{t2}\}$ in Condition 2(iii) ensures that  cross
correlation of $\{\bz_t\}$ and $\{\bx_{t2}\}$ is negligible in deriving
 the properties of the  eigenvalues of $\wh\bW$, which can be replaced by the
condition
that $\mathrm{E}(n^{-(d_i+1/2)}\sum_{t=1}^{n}x_{t}^i x_{t}^h)^2=o(1/(pr))$.
\end{remark}

\begin{remark}
Let $p=o(n^{1/2}).$ Condition 2 is implied by any of the three assertions below.

\begin{enumerate}
\item[(i)]
The components of $\bolde_t$ are independent of each other, and each component
series  $\{\varepsilon_{t}^{i}\}$ is a martingale difference sequence
with $\max_{1\leq i\leq p}\mathrm{E}|\varepsilon_{t}^{i}|^{q}<\infty $ for some $ q>2$. Furthermore, for some $2<q^{*}\leq \min\{4, q\},$
\beqn\max_{1\leq i\leq p}\mathrm{E}\left|\sum_{t=1}^{n}[(\varepsilon_{t}^{i})^2-\sigma_{ii}^2]\right|
=O(n^{2/q^{*}}).\nn\eeqn

\item[(ii)]
The components of $\bolde_t$ are independent,  $\mathrm{E}\bolde_t=0$, and
$\displaystyle\max_{1\leq i\leq p}\mathrm{E}|\varepsilon_t^i|^{\kappa}<\infty$ for some $\kappa>q \in (2, 4]$.
The process $\{\bolde_t\}$ is  $\alpha$-mixing with  mixing coefficients $\alpha_m$
satisfying
\beqn \label{18}\sum_{m=1}^{\infty}\alpha_m^{(\kappa-q)/(\kappa q)}<\infty.\eeqn

\item[(iii)]
The components of $\bolde_t$ are independent.  Each component
$\varepsilon_t^i$ satisfies the following conditions.
\begin{itemize}
\item[(a)] There exists an i.i.d random sequence $\{\eta_t^i\}$ such that
\beqn \label{16}\varepsilon_t^i=\sum_{j=0}^{\infty}c_{ij}\eta_{t-j}^i.\nn\eeqn

\item[(b)] $\mathrm{E}\varepsilon_t^i=0, \, \mathrm{E}|\varepsilon_t^i|^{q}<\infty$ for some $q>2$ and $\sum_{j=0}^{\infty}j|c_{ij}|<\infty.$
\end{itemize}
\end{enumerate}
 \end{remark}

\begin{theorem}\label{them31} Let $r$ be known and Condition 2 hold.
If $p=o(n^{1/2-\tau})$ and $\tau$  given in Condition 2,
 $$D(\mathcal{M}(\wh\bA_2), \mathcal{M}(\bA_2))=
O_p(p^{1/2}n^{-2a_1+1}(\lambda^*)^{-1}) 
,$$
 where $\lambda^*$ is the smallest eigenvalue of
$\int_{0}^{1}\bold{F}(t)\bold{F}'(t)\,dt$ defined in Lemma 9 in Section \ref{sec7} below.
\end{theorem}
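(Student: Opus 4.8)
\textbf{Proof plan for Theorem \ref{them31}.}
The plan is to mimic the fixed-$p$ analysis of Theorem \ref{them31} (the first one) but to keep explicit track of the dependence on $p$ throughout, replacing every use of weak convergence in $D^p(0,1)$ by the quantitative normal-approximation bounds supplied by Condition 2. First I would set up the decomposition $\wh\bSigma_j = \bA\,\wh\bSigma_j^{x}\,\bA'$ where $\wh\bSigma_j^{x}$ is the sample autocovariance of $\bx_t = \bA'\by_t$, and split $\wh\bW$ into blocks along the partition $\bx_t = (\bx_{t1}',\bx_{t2}')'$. The block corresponding to $\bx_{t2}$ is $O_p$-bounded in $L_2$-norm by Condition 2(ii), the cross blocks are controlled by Condition 2(iii) together with the partial-sum approximation, and the $\bx_{t1}$-block, after rescaling by $n^{-2d_{\min}+1}$ (or the appropriate power of $n$ for higher-order components), converges, in a suitable quantitative sense, to a matrix built from $\int_0^1 \bF(t)\bF'(t)\,dt$, whose smallest eigenvalue is $\lambda^*$. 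This is exactly where Lemma 9 of Section \ref{sec7} is invoked.

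Second, I would convert the eigenvector perturbation problem into a bound on $D(\mathcal{M}(\wh\bA_2),\mathcal{M}(\bA_2))$ via a Davis--Kahan-type $\sin\theta$ argument: the relevant eigen-gap of $\wh\bW$ is of order $n^{4d_{\min}-2}\lambda^*$ (the $\bx_{t1}$-block dominates), while the perturbation that tilts the bottom-$r$ eigenspace away from $\mathcal{M}(\bA_2)$ comes from the cross terms $\wh\bA_1'\wh\bSigma_j\wh\bA_2$ and is of order $n^{2d_{\min}-1}p^{1/2}$ after accounting for the $p$ coordinates via Condition 2(i),(iii). Dividing the perturbation size by the gap yields the stated rate $O_p(p^{1/2}n^{-2d_{\min}+1}(\lambda^*)^{-1})$. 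The condition $p = o(n^{1/2-\tau})$ is precisely what is needed for the error terms coming from the $O(n^{2\tau})$ approximation in (\ref{3.9}), summed over $p-r$ coordinates, to be negligible relative to the main term — I would verify this bookkeeping carefully, since it is the constraint that pins down the admissible growth of $p$.

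The main obstacle, I expect, will be the uniform control over the $p$ growing coordinates in the partial-sum approximation: one needs the Brownian-type functionals $n^{-d_i}\sum x_{s}^{i}$ to be simultaneously close (in mean square, uniformly in $i\le p-r$ and in the time argument) to the corresponding iterated integrals of the approximating normal sequence $\{\nu_t^i\}$, and then to propagate this through the repeated summations that produce an $I(d_i)$ process from an $I(0)$ one. Each additional order of integration is a further partial-sum operation, which both amplifies the approximation error and requires a fresh uniform-in-$i$ moment bound; assembling these, while tracking the $p^{1/2}$ factor from the Cauchy--Schwarz step that collapses the $(p-r)$-dimensional block norm, is the delicate part. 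Everything else — the block algebra, the $\sin\theta$ inequality, and the identification of the limit via Lemma 9 — is routine once these uniform estimates are in hand.
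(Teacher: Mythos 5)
Your skeleton (block decomposition of $\wh\bW$ along $\bx_t=(\bx_{t1}',\bx_{t2}')'$, quantitative normal approximation from Condition 2 feeding a limit matrix $\int_0^1\bF(t)\bF'(t)\,dt$, then a $\sin\Theta$ bound of the form perturbation/gap) matches the paper's in outline, but the perturbation step you propose is precisely the one the paper rules out at the start of the Appendix. You invoke an \emph{absolute-gap} Davis--Kahan argument with gap $n^{4d_{\min}-2}\lambda^*$ and a perturbation ``of order $n^{2d_{\min}-1}p^{1/2}$ coming from the cross terms.'' Neither estimate is right. Because $\wh\bW$ is quadratic in $\wh\bSigma_j$, the cross block $\wh\bSigma_{j,12}$ (entries of size $n^{d_i-1/2}$) enters the off-diagonal block of $\wh\bW$ multiplied by the nonstationary diagonal block $\wh\bSigma_{j,11}$ (entries of size $n^{d_i+d_k-1}$), so the residual $(\wh\bW-\bW^y)\bA_2$ is of order $n^{3d_{\max}-3/2}$. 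When the integration orders differ this can \emph{exceed} the absolute gap $\asymp n^{4d_{\min}-2}$ (e.g.\ $d_{\max}=2$, $d_{\min}=1$ gives $n^{9/2}$ versus $n^{2}$), which is exactly the situation $\mathrm{sep}(\bD_1^x,\bD_2^x)/\|\Delta\bW^y\|\to 0$ that the paper flags as making the Golub--Van Loan perturbation bound unusable; and even when all orders equal $d$ the ratio is only $n^{-d+1/2}$, strictly weaker than the claimed $n^{-2d+1}$ for $d\ge 1$. The paper's proof instead uses the \emph{relative} Weyl-type bound of Dopico, Moro and Molera (2000), $\|\sin\Theta(\wh\bA_2,\bA_2)\|_F\le\|(\bW^{y})^{-1/2}\Delta\bW^{y}(\wh\bW)^{-1/2}\|_F/\eta$ with the relative gap $\eta=\min|\lambda-\mu|/\sqrt{\lambda\mu}\gtrsim n^{2d_{\min}-1}\lambda^*$; the two-sided rescaling by $(\bW^y)^{-1/2}$ and $(\wh\bW)^{-1/2}$ is what neutralizes the heterogeneous scales $n^{4d_i-2}$. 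Without this switch the argument does not close, so this is a genuine gap rather than an alternative route.

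A secondary discrepancy: your source for the factor $p^{1/2}$ (a Cauchy--Schwarz collapse of the $(p-r)$-dimensional cross block) is not where it arises. In the paper the numerator is rewritten via the identity $(\bW^{y})^{-1/2}\Delta\bW^{y}(\wh\bW)^{-1/2}=(\bW^{y})^{-1/2}(\wh\bW)^{1/2}-(\bW^{y})^{1/2}(\wh\bW)^{-1/2}$, and each term is a $p\times p$ matrix whose generalized eigenvalues are $O_p(1)$ by Lemmas 9--10; its Frobenius norm is therefore $O_p(p^{1/2})$, and that is the only place $p^{1/2}$ enters. Your remarks on the uniform-in-$i$ strong approximation and on why $p=o(n^{1/2-\tau})$ is needed are consistent with what Lemma 9 actually requires, but they support the wrong perturbation inequality as you have set it up.
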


\begin{remark} Theorem \ref{them31} is derived under the condition
$p=o(n^{1/2-\tau})$, while there are no direct constraints on either $r$ or $p-r$.
However when $p-r$ is fixed, $\int_{0}^{1}\bold{F}(t)\bold{F}'(t)\,dt$ is a
$(p-r)\times (p-r)$ positive definite matrix, and, hence,
$\lambda^*$ is  positive and  $O_e(1)$.  When  the  integration orders of
all the nonstationary components are the same and equal to $d_{min}$, then
$(\lambda^*)^{-1}=O_{p}((p-r)^{2d_{min}-1}).$

\end{remark}

\begin{theorem} Let Condition 2 hold  and $p=o(n^{1/2-\tau})$.
Then
 $$\lim_{n\rightarrow\infty}P(\, \wh{r}=r\, )= 1,$$ provided $(\lambda^*)^{-1}p^{1/2}n^{-a_1+1/2}=o(1)$.
\end{theorem}

\section{Fractional cointegration}
\label{sec5}

Fractional cointegration has attracted increasing attention  in recent years, see, e.g.,
 Robinson and Hualde (2003), Chen and Hurvich (2006) and
Robinson (2008). In this section, we generalize the method
presented in Section 2 to  cases when the components
of $\by_t$ may be fractionally integrated.
 For simplicity, we now assume $p$ is fixed. 

Let $v_{t}^{+}=v_t \idv(t>0)$ and for any $\alpha \in \mathbb{R}$,
\beqn \Delta^{-\alpha}=\sum_{j=0}^{\infty}a_j(\alpha)B^j, \quad a_j(\alpha)={\Gamma(j+\alpha)\over {\Gamma(\alpha)\Gamma(j+1)}}\nn\eeqn
be formally defined as in Hualde and Robinson (2010), where $B$ is the backshift operator.
With these definitions we can extend the definition of the $I\left(
d_{1},...,d_{m}\right) $ process $\mathbf{v}_{t}$ in Section 2 to
non-negative real-valued $d_{i},$ such that $d_{i}\neq k-1/2$ for any
integer $k.$  Note that for $d_{i}<1$/$2$ the $i$th element of $\mathbf{v}%
_{t}$ is `asymptotically stationary' (due again to the truncation in the
definition of $\mathbf{v}_{t}$), while $d_{i}>1$/$2$ represents the
`nonstationary' region.

With this extended definition to cover fractional time series we again
consider a $p\times 1$ \ observable $I\left( d_{1}, \cdots,d_{p}\right) $ time
series  $\mathbf{y}_{t}$ satisfying (2.1), partitioning $\mathbf{x}_{t}$ as
before.  However we also extend the definition of cointegration, saying
that  $\mathbf{y}_{t}$ is cointegrated if at least two $d_{i}$ are equal
and exceed $1$/$2$ and there exists a linear combination giving nonzero
weight to two or more of these that is $I\left( c\right) $ $\ $for $0\leq
c< d_i.$  Thus, let  $a_1>1/2$ be the smallest integration order of
elements of $\mathbf{x}_{t1}$ and let $ \delta \in \left[ 0, a_1\right) $ be
the  integration order of elements of $\mathbf{x}_{t2}.$
 Thus, each component of  $\bx_{t2}$ is a cointegrating error
of $\by_t$. Let $\bA=(\bA_1, \bA_2)$ and $\mathcal{M}(\bA_2)$ be defined as in
Section 2. Then  $\mathcal{M}(\bA_2)$ is called the fractional
cointegration space  and  $r$ is called the fractional cointegration rank.
We estimate $\mathcal{M}(\bA_2)$ and $r$ in the same manner as in Section 2.

Furthermore, let  $r_1, \cdots, r_q$ be $q$ positive integers with $r_1 +
\cdots + r_q = p-r$, and
  $1/2 < a_1 < \cdots < a_q $.  Suppose that $\bx_{t1}$ consists  of
 $r_j$ $I(a_j)$ components.
Let
\begin{equation}
\label{FA1j}
\wh \bA_1 = ( \wh \bA_{1q}, \cdots, \wh \bA_{11}),
\end{equation}
where $\wh \bA_{1j}$ has $r_{j}$ columns. Then $\wh \bx_{t1j} = \wh
\bA_{1j}' \by_t$ is the estimated components  of $\bx_{t1}$ (i.e., $\bx_{t1j}=\bA'_{1j}\by_t$)  of integration order $a_j$.

Let $\bve_i=(\varepsilon_{i}^1, \cdots, \varepsilon_i^p)'$ be the $p$-dimensional $I(0)$ with mean zero such that $\nabla^{d_j}x_i^j=\varepsilon_i^{j}+\mu_j.$ Let $
 \bold{S}_{n}(t)=\sum_{i=1}^{[nt]}\bve_i$ and $
 I_1=\{i: d_i<1/2, \, 1\leq i\leq p\}$.

\vskip3mm
 {\bf Condition 3.}

\begin{itemize}

\item[](i) $\mathrm{E}||\bve_{t}||_2^q<\infty$ for some $q>\max(4, 2/(2a_1-1))$ and for any $i, j\in I_1$, as $n\rightarrow\infty,$
    \beqn {1\over n}\sum_{t=1}^{n}x_t^i x_t^j \stackrel{p}{\longrightarrow} \mathrm{E}[x_1^i x_1^j].\nn
    \eeqn

\item[](ii) There exists an i.i.d mean zero $p\times 1$ normal vector $\{\bold{w}_i\}$ such that as $n\rightarrow\infty,$
\beqn \max_{0\leq t\leq 1} ||S_{n}(t)-\sum_{i=1}^{[nt]}\bold{w}_i||_2=o_p(n^{1/s}), \, \, \hbox{for some} \ \ s>2.\nn\eeqn

\end{itemize}

 \begin{remark}  Condition 3 is mild and satisfied by either of the following processes.
\begin{enumerate}
\item Suppose $\bve_t$ follows a linear process:
\beqn \bve_t=\sum_{k=0}^{\infty}\bold{C}_{k}\bold{e}_{t-k}, \, t=1, 2, \cdots\nn\eeqn
and $\{\bold{e}_{t}\}$ are i.i.d  vectors with mean zero, $\mathrm{E}\bold{e}_{t}\bold{e}'_{t}=\Sigma_e>0, \, \mathrm{E}||\bold{e}_{t}||_2^q<\infty$ for some $q>4,$
the $p\times p$ coefficient matrices $\bold{C}_{k}$ satisfy $\displaystyle\sum_{k=0}^{\infty}k||\bold{C}_{k}||^2<\infty.$  Then, by Lemma 2 of Marinucci and Robinson (2000), we have (ii) of Condition 3 holds. (i) follows by  ergodicity.

\ignore{\item Suppose $\bve_t$ is weakly stationary with det$(\var(\bve_t))\ne 0$,
$\mathrm{E}\|\bolde_t\|^{\kappa}< C $ for some constants $\gamma>1$ and
$C< \infty$, and is
also $\alpha$-mixing  with
 mixing coefficients satisfying the condition (\ref{18}), then (ii) of Condition 3 follows from (\ref{19}). Using (\ref{18}) and the element inequality of mixing process (see Lemma 1.2.4 of Lin and Lu (996)), it can be shown that
\beqn \mathrm{E}\left|{1\over n}\sum_{t=1}^{n}[x_t^{i}x_t^j-\mathrm{E}(x_t^{i}x_t^j)]\right|&=&
{1\over n}\mathrm{E}\left|\sum_{t=1}^{n}\sum_{h_1=0}^{t}\sum_{h_2=0}^{t}
\alpha_{t-h_1}(d_i)\alpha_{t-h_2}(d_j)[\varepsilon_{h_1}^i\varepsilon_{h_2}^j
-\mathrm{E}(\varepsilon_{h_1}^i\varepsilon_{h_2}^j)]\right|\nn\\
&=&o(1),\nn\eeqn
which implies (i) of Condition 3.}

\item Suppose $\bve_t$ follows a generalized random coefficient autoregressive model:
\beqn \bve_t=\bC_t \bve_{t-1}+\bold{e}_t\eeqn
and $\{(\bC_t, \bold{e}_t)\}$ are i.i.d random variables with $\mathrm{E}||\bC_1||_2^q<1$ and $\mathrm{E}||\bold{e}||^q<\infty$ for some $q>2$, then (ii) of Condition 3 holds with
$s<\min\{q, 4\}$, see Corollary 3.4 of Liu and Lin (2009). Similarly, (i) follows by  ergodicity.

\end{enumerate}

 \end{remark}
\begin{theorem} \label{them5}
Let $r$ be known.
 Under Condition 3, $D(\mathcal{M}(\wh\bA_2), \mathcal{M}(\bA_2)) = o_p(1)$.
 Furthermore,
\begin{itemize}
\item [(i)] when $\delta<1/2$,
\begin{itemize}
\item[(a)]$D(\mathcal{M}(\wh\bA_2), \mathcal{M}(\bA_2))=O_e(n^{-2a_1+1})$ provided  either
$|I_0|\geq 2$ or $ |I_0|=1$ and $\mu_{I_0}= 0;$
\item [(b)] $D(\mathcal{M}(\wh\bA_2), \mathcal{M}(\bA_2))=O_e(n^{-2a_1})$
provided $|I_0|=1, \, \mu_{I_0}\neq 0$;
\end{itemize}
\item [(ii)] when $\delta>1/2$ and $\mu_j=0$ for $j\geq p-r$,
$D(\mathcal{M}(\wh\bA_2), \mathcal{M}(\bA_2))=O_e(n^{-2(a_1-\delta)})$;
\item [(iii)] when $\mu_j=0$ for  $j=1, \cdots, p-r$,
 \[D(\mathcal{M}(\wh\bA_{1j}),
\mathcal{M}(\bA_{1j}))=O_e(n^{-2\alpha_j})\quad \hbox{for} \quad j=1, \cdots, q,\]
 \end{itemize}
 where $ I_0$ and $\alpha_j$  are defined as in Theorem 1.
\end{theorem}

\begin{theorem} \label{them6}
Let  Condition 3 hold.
Then
 $\lim_{n\rightarrow\infty}P(\, \hat{r}=r\, )= 1$,
provided $1\le r <p$.
 \end{theorem}

 \ignore{\begin{remark} The above theorems include the most popular
case $\delta<1/2$. Under this special case, $\bx_{t2}$ is the
stationary  part of $\bx_t$ and $\mathcal{M}(\bA_2)$ is  the linear space
spanned by the corresponding columns of $A$. So, the above theorem can
be seen as a generalization of the results in Section 3.1.
 \end{remark}}

\section{Numerical properties}
\setcounter{equation}{0}

We illustrate the proposed method with 4 simulated examples and one real data set.
Note that the comparison with Johansen's (1991) likelihood method is carried
out for Example 1 and the real data example only, as Examples 2 concerns different
integration orders for different components, Example 3 illustrate the method
in the presence of an additional
deterministic linear trend, and Example 4 is a model
of fractional cointegration. Johansen's method is not applicable to \linebreak Examples~2-4.

\noindent
{\bf Example 1.} Let the first three components of $\by_t$ be the same as
Exercise 3.1 in Johansen (1995), i.e.
\beqn \label{5.3} \left(\begin{array}{c} y_{t1}\\
y_{t2}\\
y_{t3}
\end{array}\right)=
\left(\begin{array}{ccc} 1 & 1&0\\
1/2 &0 &1\\
0& 1 &0
\end{array}\right)\left(\begin{array}{c} x_{t1}\\
x_{t2}\\
x_{t3}
\end{array}\right)=:\bA_{11}\left(\begin{array}{c} x_{t1}\\
x_{t2}\\
x_{t3}
\end{array}\right)
,\nn\eeqn
where  $x_{t1}$ is an $I(1)$ process, $x_{t2}, x_{t3}$ and the
innovations in $x_{t1}$ are independent $N(0,1)$. For  $p>3$,
we add to $y_{t1}, y_{t2}, y_{t3}$ above $r-2$ extra   stationary
AR(1) components and $p-r-1$ ARIMA(1,1,1)
components. All the coefficients in AR(1) are 0.5,
the coefficients in ARIMA(1,1,1) are $(0.6, 0.8)$, and all the innovations are
independent $N(0,1)$.
Except for the elements in $\bA_{11}$ specified above, all
the other elements of $\bA$ are generated independently from $U(-3, 3)$.
 For each setting with different combinations of $p, r$ and $n$ (see Table  \ref{tab1}),
we draw 500 samples.
We set $j_0=5$ in (\ref{2.6}), and
estimate the cointegration rank $r$ by (\ref{ACF}) with $c_0=0.3$ for each of the 500 samples.
Then with $r=\wh r$, we estimate $\wh \bA$ by (\ref{Ax}).
Since $\wh r$ is not
necessarily equal to $r$, and $\bA $ is not a half orthogonal matrix (as specified above), we
extend the definition of discrepancy measure (\ref{3.1}) as follows:
\begin{equation} \label{4.1}
D_1(\calM(\wh \bA_2), \calM(\bB_2) ) = \Big\{ 1 -
 {\tr\big(\wh\bA_2 \wh\bA_2'
 \bB_2( \bB_2' \bB_2)^{-1}  \bB_2'\big) \over
\max( r, \wh r)} \Big\}^{1/2},
\end{equation}
where 
$\bB_2$ is the $p \times r$ matrix
consisting of the last $r$ columns of $(\bA^{-1})'$, as now $\bx_{t2}=\bB_2\by_t.$
Then $D_1(\calM(\wh \bA_2), \calM(\bB_2) ) \in [0, 1]$,
being 1 if and only if $\calM(\wh \bA_2)$ and $\calM(\bB_2)$ are mutually orthogonal, and 0 if and only if the two subspaces are the same. When
$\wh r=r$ and $ \bA' \bA=\bI_p$, $\bB_2 = \bA_2$ and
 $D_1(\calM(\wh \bA_2), \calM(\bB_2) ) =
D(\calM(\wh \bA_2), \calM(\bA_2) )$ defined in (\ref{3.1}).
The relative frequencies (RF) for the occurrence of the event $\{ \wh r = r\}$ and
the average value of $D1=D_1(\calM(\wh \bA_2), \calM(\bB_2) )$ over 500 replications
 are listed in Table  \ref{tab1} under the name new method (New).

Also included in Table  \ref{tab1} are the results of Johansen's
likelihood estimation with  cointegration rank $r$ estimated by the trace test; see
Johansen (1991). We apply the method twice with  testing level set
at 0.05 and 0.01, respectively, marked as Jo(0.05) and Jo(0.01) in Table  \ref{tab1}.
The null-distribution of the trace test statistic is approximated by
that of
\[
\Big[\sum_{t=1}^{T}\bolde_t(\boldX_{t-1}-\bar{X})'\Big] \Big[\sum_{t=1}^{T}(\boldX_{t-1}-\bar{\boldX})(\boldX_{t-1}-\bar{\boldX})'\Big]^{-1}
\Big[\sum_{t=1}^{T}(\boldX_{t-1}-\bar{\boldX})\bolde'_t\Big], 
\]
where $\bolde_t=(\varepsilon_{t,1}, \cdots, \varepsilon_{t,p-r})',
\boldX_{0}=0$ and $\boldX_{t}=\sum_{j=1}^{t}\bolde_t$,  and
$\{\varepsilon_{t, i}\}$ are independent $N(0,1)$. See Johansen and
Juselius (1990). This approximate distribution is calculated by simulation
with $T=1000$ and 6000 replications.

Table \ref{tab1} indicates clearly that the newly proposed method always outperforms
Johansen's method. More precisely the estimator $\wh r$ defined in (\ref{ACF})
achieves higher relatively frequencies for hitting the true value $r$ than those
achieved by the trace test with significance level at either 0.05 or 0.01.
Note that the first part of Table \ref{tab1} with $p=3$ and
$r=2$ corresponds to the same setting of Example 3 of Johansen (1995).
The inference is more challenging when $p$ and $r$ increase.
When $p=30, r=10$, our new method works reasonably well when the sample size $n=1000$
and it works almost perfectly when $n \ge 1500$. On the other hand, Johansen's
method, which is not designed for large $p$, fails to perform even when
 $n=2000$ or $2500$.

\renewcommand{\baselinestretch}{1}
\begin{table}[hptb]
\centering
\caption{\small  Relative  frequencies (RF) of  correct estimation of  $r$ and
 average distance $D_1$ defined in (\ref{4.1}) in simulation
with 500 replications for Example 1.} \label{tab1}
\vskip3mm
{\small
\begin{tabular}{|l|l|c|c|c|c|c|c|c|c|c|c|c|c|}
\hline
 &  &\multicolumn{2}{c|}{$n$=200}&\multicolumn{2}{c|}{$n$=300}
 &\multicolumn{2}{c|}{$n$=500}& \multicolumn{2}{c|}{$n$=1000} &
\multicolumn{2}{c|}{$n$=1500}& \multicolumn{2}{c|}{$n$=2000} \\
 \cline{3-14}
 & Method &RF& $D_1$ &RF& $D_1$ &RF& $D_1$ &RF& $D_1$ &RF& $D_1$ &RF& $D_1$  \\
 \hline
 \hline
{$p$=3} &Jo(0.05) &.930 &.051  &.964 &.028 
 &.944 &.036 &.954 &.028 &.942 &.034 &.966 &.020\\
 \cline{2-14}
{$r$=2} &Jo(0.01) &.980 &.026  &.996 &.011 
& .990 &.011 &.992 &.007 &.994 &.005 &.984 &.010\\
\cline{2-14}
 & New &.968 &.032 &.998 &.009 
 &1.00 &.005 &1.00 &.002 &1.00 &.002 &1.00 &.001  \\
 \hline
 \hline
{$p$=6} &Jo(0.05) &.558 &.276  &.636 &.226  
&.644 &.217 &.640 &.215 &.702 &.177 &.640 &.214\\
 \cline{2-14}
{$r$=2} &Jo(0.01)  &.760 &.184  &.856 &.117  
&.802 &.123 &.862 &.083 &.852 &.088 &.866 &.079 \\
\cline{2-14}
 & New   &.388 &.375  &.838 &.117  
 &.982 &.027 &.994 &.013 &1.00 &.004 &1.00 &.006\\
 \hline
\hline
{$p$=9} &Jo(0.05) &.200 &.445   &.216 &.422   
&.312 &.367 &.344 &.345 &.352 &.337 &.380 &.323\\
 \cline{2-14}
{$r$=3} &Jo(0.01)  &.558 &.290   &.598 &.251   
 &.666 &.185 &.708 &.154 &.742 &.135 &.752 &.129 \\
\cline{2-14}
 & New  &.016 &.605   &.384 &.341   
 &.922 &.066 &.998 &.018 &1.00 &.010 &1.00 &.006 \\ 
 \hline
 \hline
{$p$=12}&Jo(0.05) &.064 &.539  &.144 &.466   
&  .198 &.416 &.254 &.375 &.270 &.362 &.288 &.352 \\ 
 \cline{2-14}
{$r$=4} &Jo(0.01)&.226 &.426  &.318 &.354   
& .432 &.282 &.490 &.243 &.520 &.225 &.544 &.212\\ 
\cline{2-14}
 & New  &0    &.681  &.054 &.534   
 &.794 &.120 &.996 &.021 &1.00 &.011 &.998 &.009\\ 
 \hline
 \hline
{$p$=18} &Jo(0.05) &0    &.653   &.006 &.586     
& .016 &.535 &.056 &.478 &.090 &.448 &.092 &.443\\ 
 \cline{2-14}
{$r$=6} &Jo(0.01) &.006 &.595   &.020 &.522     
&.046 &.468 &.158 &.379 &.226 &.349 &.236 &.333\\ 
\cline{2-14}
 & New  &0    &.737   &0    &.675     
 &.092 &.429 &.986 &.032 &1.00 &.016 &1.00 &.011\\ 
 \hline
 \hline
{$p$=24} &Jo(0.05) &0 &.742   &0 &.664  
& 0 &.580 &.008 &.507 &.002 &.488 &.006 &.480\\ 
 \cline{2-14}
{$r$=8} &Jo(0.01) &0 &.703   &0 &.613  
& 0 &.532 &.006 &.468 &.026 &.438 &.020 &.435\\ 
\cline{2-14}
 & New  &0 &.759   &0 &.719  
 & 0 &.593 &.898 &.064 &1.00 &.022 &1.00 &.014 \\ 
\hline
\hline
{$p$=30}&Jo(0.05) &0 &.790   &0 &.732   
&0 &.628 &0 &.556 &0 &.527 &.002 &.512 \\ 
 \cline{2-14}
{$r$=10} &Jo(0.01) &0 &.772   &0 &.691   
& 0 &.591 &.004 &.514 &.004 &.480 &.004 &.466 \\ 
\cline{2-14}
 & New  &0 &.771   &0 &.742   
 &0 &.662 &.482 &.186 &.984 &.030 &1.00 &.018 
\\
 \hline
\end{tabular}
}
\end{table}

\vskip3mm

\noindent
{\bf Example 2.} Now
in model (\ref{2.1}) let $\bx_{t2}$ consist of   $r$
stationary AR(1) processes with coefficients $-0.4+i/r \, (i=1,
\cdots, r)$, and let $s$ components of $\bx_{t1}$ be ARIMA(1,1,1) with  coefficients
$0.3+0.5i/s$ and $0.2+0.6i/s \, (i=1,  \cdots, s)$,
and the other $p-r-s$ components be ARIMA(0,2,1) with
 coefficients generated
independently from $U(-0.95, 0.95)$. Hence $\bx_{t1}$ consists of
a mixture of $I(1)$ and $I(2)$ processes.
All  innovations involved are independent $N(0,1)$.
Let the elements of $\bA$ be generated independently from $U(-3, 3)$.
We estimate the cointegration rank $r$ by (\ref{ACF}), and apply the same method
to the differenced $\wh \bx_{t1}$ to estimate $s$; see Section 2.4 above.
For each setting, we replicate the exercise 500 times. The relative frequencies for
the occurrence of events $\{ \wh r = r\}$ and $\{ \wh s = s\}$ are
listed in
Table \ref{tab2}.

\renewcommand{\baselinestretch}{1}
\begin{table}[hptb]
\centering
\caption{\small Relative  frequencies of  correct
estimation of  $r$  and $s$ by the Phillips-Perron test (PP.test) and
method (\ref{ACF}) in simulation
with 500 replications for Example 2.} \label{tab2}
\vskip3mm
{\small
\begin{tabular}{|c|c| c|c|c|c| c|c|}
\hline
 & $n$ &200 &300  &{500}& {1000} & {1500}& {2000}\\
 \cline{2-8}
  \raisebox{1ex}{(p, r, s)}& Method & r \; s & r \; s & r \;  s& r \; s & r \; s & r \;  s \\
 \hline
 \hline
  (6, 2, 2)& PP.test &.964 .412  &.970 .440     
  &.978 .420  &.982 .416  &.970 .448  &.960 .460 \\ 
\cline{2-8}
& (\ref{ACF})  &.614 .486  &.908 .766      
&.962 .814  &.944 .876  &.942 .892  &.924 .898   \\
 \hline
(6, 3, 1)& PP.test &.996 .288    &1.00 .336   
&.996 .342   &.992 .408   &.998 .416  &.998 .430   
\\
\cline{2-8}
 & (\ref{ACF})   &.904 .604    &.992 .782   
 &.998 .896   &.986 .924   &.992 .940  &.988 .958 \\ 
 \hline
(10, 4, 4)& PP.test &.840 .348   &.874 .392   
&.854 .392   &.852 .446    &.842 .430    &.824 .454 \\
\cline{2-8}
 & (\ref{ACF})  &.078 .162   &.538 .480   
 &.924 .798   &.940 .866    &.896 .858    &.880 .870 \\ 
 \hline
\cline{2-8}
  (10, 6, 2)& PP.test  &.984 .262   &.986 .276   
  &.978 .330   &.984 .322    &.978 .404   &.974 .406 
\\
\cline{2-8}
& (\ref{ACF})   &.566 .488   &.932 .740   
&.954 .826   &.942 .874    &.920 .876   &.910 .884 \\ 
 \hline
\cline{2-8}
 (15, 8, 4)&PP.test &.780 .192   &.792 .174  
 &.812 .218    & .750 .232         &.726 .260  &.658 .310 
\\
\cline{2-8}
& (\ref{ACF})  &.006 .110   &.326 .372  
&.868 .684     &.836 .708          &.858 .770  &.830 .768  \\ 
 \hline
\end{tabular}
 }
\end{table}

Also included in Table  \ref{tab2} are the results from
applying the Phillips-Perron unit-root test (PP.test), with  significance
level set at 0.01,  for estimating $r$; see (\ref{hyper0}).
By applying the same procedure to the differenced $\wh \bx_{t1}$, we also obtain the
estimated  $s$.
When $p$ is small, the PP.test estimates $r$ slightly better than
(\ref{ACF}) though both  methods perform well. For estimating $s$, the PP.test
is much worse than (\ref{ACF}).
When $p$ is large, (\ref{ACF}) performs substantially better than the
PP.test.
Also noticeable in   	
Table  \ref{tab2} is the fact that the larger  $r/p$ is, the more accurate are the estimates
for $r$, and the larger $s/(p-r)$ is,
the more accurate are the estimates for $s$.
Overall  (\ref{ACF}) provides more a stable performance than PP.test.

Figs~\ref{fig6}--\ref{fig7} present the boxplots of
$D_1(\calM(\wh \bA_2), \calM(\bB_2)) $ and $D_1(\calM(\wh \bA_{11}), \linebreak \calM(\bB_{11}))$
for $(p,r,s)=(6, 2, 2)$ and $(10, 4, 4)$ respectively, where
$\calM(\bB_{11})$ is the true cointegration space specified by the $I(1)$ components
of $\bx_{t1}$. As expected, the estimation errors decrease as sample size
$n$ increases, and the errors with $(p,r,s)=(10, 4, 4)$ are greater than
those with $(p,r,s)=(6, 2, 2)$.

\begin{figure}[htbb]
\begin{minipage}[t]{0.48\linewidth}
\centering
 \includegraphics[width=2.5in, height=2in]{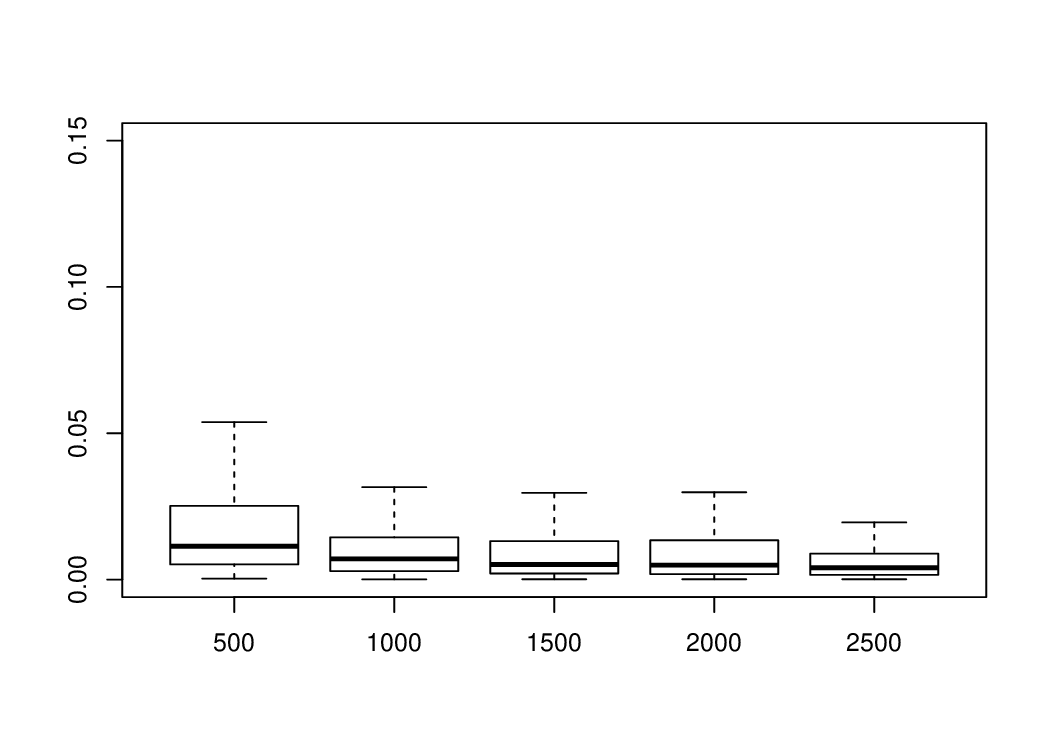}
\end{minipage}
\begin{minipage}[t]{0.48\linewidth}
\centering
 \includegraphics[width=2.5in, height=2.0in]{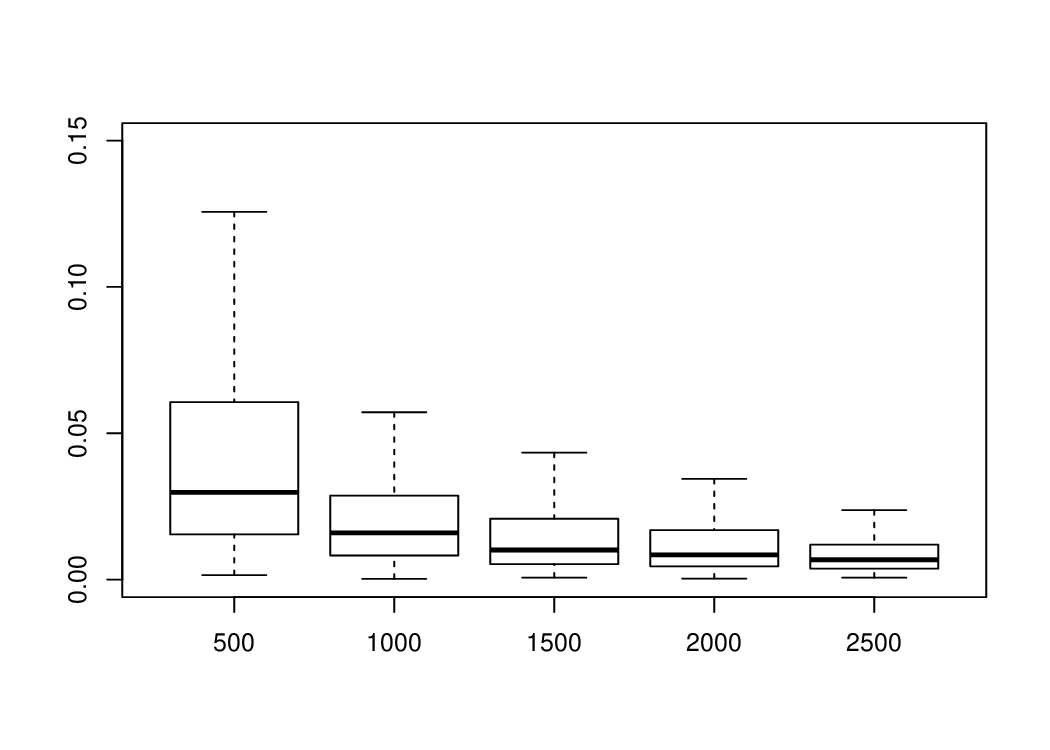}
\end{minipage}
\caption{Example 2: Boxplots of $D_1(\calM(\wh \bA_2), \calM(\bB_2))$
(left panel) and $D_1(\calM(\wh \bA_{11}), \calM(\bB_{11}))$ (right panel)
when $(p, r, s)=(6, 2,2)$. The labels on the horizontal axis are sample size $n$.}
\label{fig6}
\end{figure}

\begin{figure}[htbb]
\begin{minipage}[t]{0.48\linewidth}
\centering
 \includegraphics[width=2.5in, height=2in]{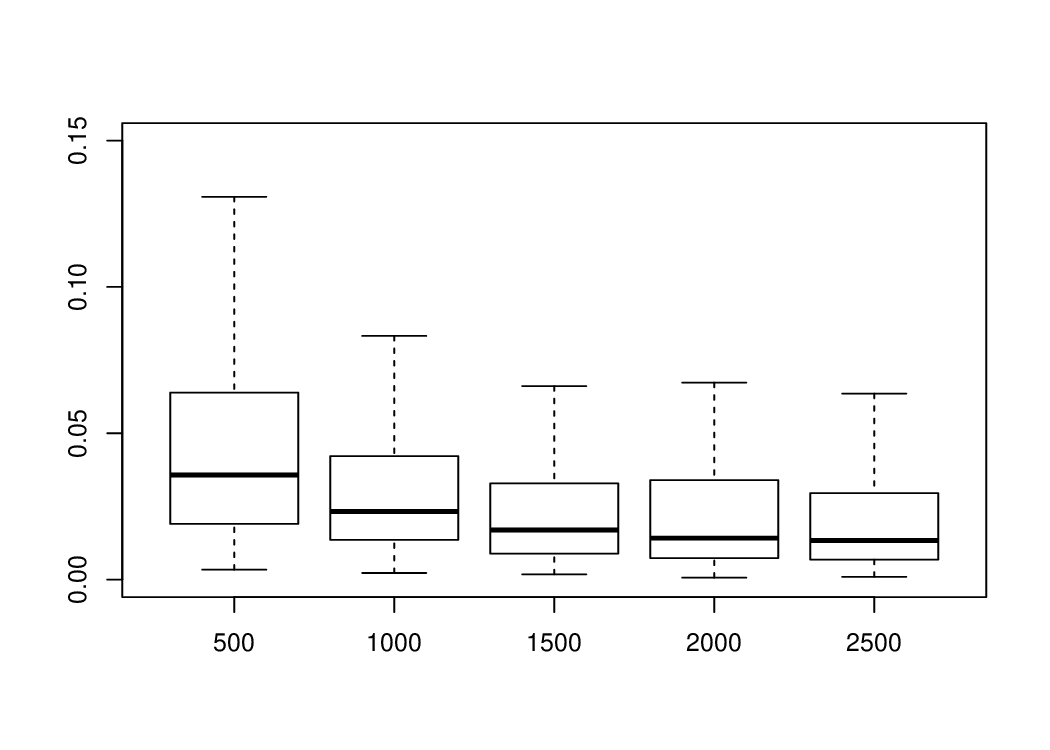}
\end{minipage}
\centering
\begin{minipage}[t]{0.48\linewidth}
\centering
 \includegraphics[width=2.5in, height=2in]{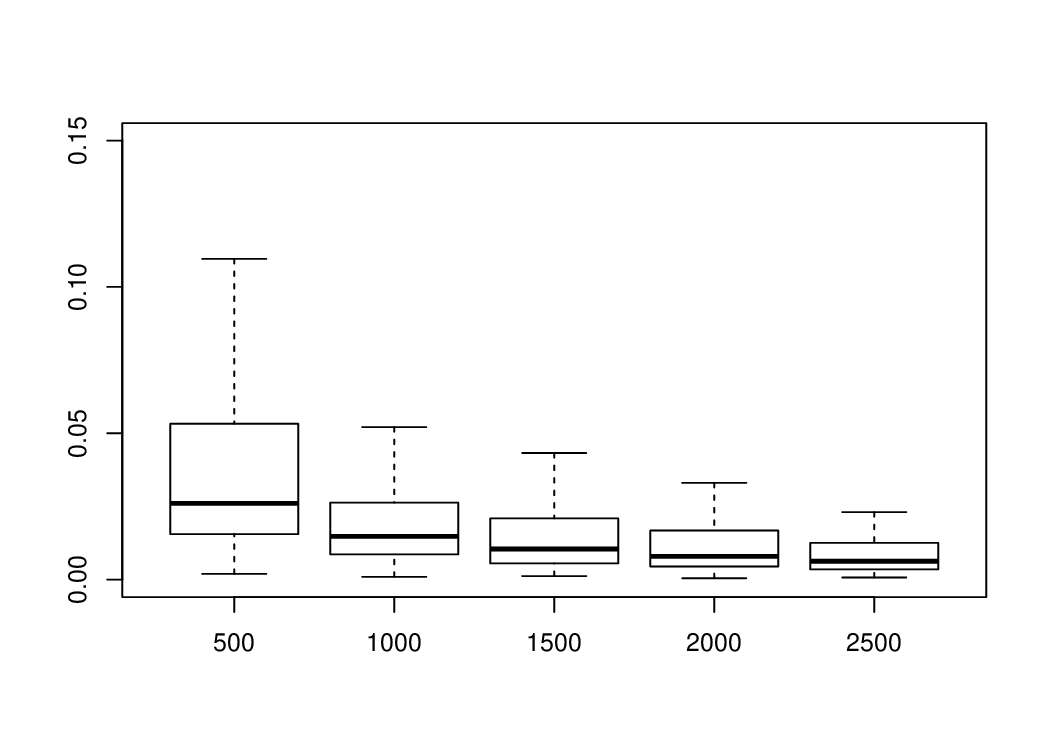}
\end{minipage}
\caption{Example 2: Boxplots of $D_1(\calM(\wh \bA_2), \calM(\bB_2))$
(left panel) and $D_1(\calM(\wh \bA_{11}), \calM(\bB_{11}))$ (right panel)
when $(p, r, s)=(10, 4,4)$. The labels on the horizontal axis are sample size $n$.}
\label{fig7}
\end{figure}

 {\bf Example 3.} Now we consider an example in which
the  components of $\by_{t}$ are $I(1)$ with linear trend, i.e.,
\beqn \label{trend}\by_{t}=\boldmu_1+\boldmu_2 t+\boldZ_t=\bA\bx^*_t\eeqn
for some $(\bx^*_t)'=(\bx^*_{t1},\bx_{t2})$, where
$\bx^*_{t1}=\boldmu^*_1+\boldmu^*_2 t+\bx_{t1}, \, \bx_{t1}$ is
nonstationary process and $\bx_{t2}$ is stationary process.
In our simulation, all component of $\boldmu^*_1$ and $\boldmu^*_2$  are
taken as  $0.3$ and $0.5$ respectively, all  components of
$\bx_{t2}$   are AR(1) with coefficients
generated  from $U(-0.8, 0.8)$,	
all components of $\bx_{t1}$ are ARIMA(1,1,1) with AR coefficients generated
from $U(0, 0.8)$ and MA  coefficients generated from $U(0, 0.95)$, and
all innovations are independent $N(0,1)$. Table \ref{tab3} reports the
relative frequencies of the occurrence of the event
$\{ \wh r=r\}$ and the average distance (\ref{4.1})
 in a simulation with 500 replications, where the cointegration rank is
estimated by  (\ref{ACF}) with $c_0=0.3$. Also included in Table
\ref{tab3} are the results obtained from
applying the Phillips-Perron unit-root test
to estimate $r$, see (\ref{hyper0}).  Table \ref{tab3} indicates that
(\ref{ACF}) works well even in the presence of a deterministic linear
trend, where our theoretical setting exclude.  However 
the Phillips-Perron test  performs poorly for large  $p$ and small $r/p$.

 \renewcommand{\baselinestretch}{1}
\begin{table}[hptb]
\centering
\caption{\small Relative  frequencies of  correct estimation of  $r$ and  average distance in simulation
with 500 replications for Example 3.} \label{tab3}
\vskip3mm
{\small
\begin{tabular}{|c|c|c|c|c|c|c|c|c|c|c|c|c|c|}
\hline
 &  &\multicolumn{2}{c|}{n=200} &\multicolumn{2}{c|}{n=300} &\multicolumn{2}{c|}{n=500}& \multicolumn{2}{c|}{n=1000} & \multicolumn{2}{c|}{n=1500}& \multicolumn{2}{c|}{n=2000} \\
 \cline{3-14}
 \raisebox{1ex}{(p,r)}& \raisebox{1ex}{Method}&RF& $D_1$ &RF& $D_1$ &RF& $D_1$ &RF& $D_1$ &RF& $D_1$ &RF& $D_1$ \\
 \hline
 & PP.test  &.882 &.087   &.780 &.143  
 &.664 &.200   &.950 &.030    &.838 &.096   &.746 &.150 \\  
 \cline{2-14}
 \raisebox{1ex}{(6, 2)}& New  &.452 &.331   &.858 &.107  
 &.982 &.026   &1.00 &.002    &1.00 &.004   &.998 &.008 \\ 
 \hline
 \hline
 & PP.test  &.988 &.010  &.988 &.007   
 &1.00 &.002   &.998 &.002   &.996 &.002  &.990 &.005 \\ 
 \cline{2-14}
\raisebox{1ex}{(6, 4)}& New   &.974 &.016  &1.00 &.002   
&1.00 &.002   &1.00 &.001   &1.00 &.002  &1.00 &5e-4   
\\
 \hline
 \hline
& PP.test &.842 &.092   &.398 &.293    
&.624  &.182    &.330 &.324   &.334 &.319   &.488  &.244 \\ 
 \cline{2-14}
\raisebox{1ex}{(10, 4)}& New   &.066 &.485   &.328 &.327   
&.966  &.042    &1.00 &.021   &1.00 &.007  &1.00   &.012  
 \\
 \hline
 \hline
 & PP.test  &.766 &.107  &.316  &.279   
 &.664 &.132    &.846 &.062   &.806 &.076   &.876 &.048 \\
 \cline{2-14}
\raisebox{1ex}{(10, 6)} & New  &.432 &.231  &.796  &.103   
&.998 &.010    &1.00 &.006   &1.00 &.003   &1.00 &.002  
\\
 \hline
 \hline
 & PP.test &.082 &.454   &.166 &.377   
 &.094 &.424    &.142 &.388    &.046  &.468  &.096 &.436\\
 \cline{2-14}
\raisebox{1ex}{(15, 6)}&New &0    &.651   &.004 &.521   
&.506 &.221    &.996 &.021    &.998  &.021  &1.00 &.005 
\\
\hline
\hline
& PP.test &.290 &.240   &.592 &.137  
&.336 &.217   &.484 &.157   &.798 &.064   &.446 &.177   
\\
 \cline{2-14}
\raisebox{1ex}{(15, 10)}& New  &.066 &.332   &.646 &.124  
&.964 &.034   &1.00 &.003   &1.00 &.004   &1.00 &.007  
\\
 \hline
 \hline
& PP.test &0 &.628     &0 &.667      
&0  &.671    &0   &.686   &0    &.696   &0    &.703    
\\
 \cline{2-14}
\raisebox{1ex}{(30, 10)}& New &0 &.769     &0 &.737      
&0  &.655   &.364 &.234   &.974 &.062   &.994 &.040    
\\
 \hline
 \hline
& PP.test &0 &.346   &.004 &.329    
&.010  &.324    &.010   &.314  &.034    &.294   &.006    &.329    
\\
 \cline{2-14}
\raisebox{1ex}{(30, 20)}& New &0 &.456   &.002 &.368    
&.344  &.168   &1.00 &.019   &1.00 &.010   &1.00 &.010    
\\
 \hline
\end{tabular}
}
\end{table}

{\bf Example 4.} We consider fractional cointegration cases now. Let the components of
$\bx_{t1}$ be $I(d)$ processes with a fractional order $d=4/5$ or 3/4,
the components of $\bx_{t2}$ be AR(1) with autoregressive coefficients $0.2i$
($i=1, \cdots, r$), the
elements of $\bA$ be generated independently from $U(-3, 3)$, and
all innovations be independent and $N(0,1)$.
We consider various combinations for $p, \, r$, $s$, and the sample size $n$.
For each setting, we replicate the simulation 500 times and
estimate the cointegration rank $r$ using (\ref{ACF}) with $c_0=0.3$.
The relative frequencies for the occurrence of the event $\{ \wh r =r\}$ and
the mean of distance (\ref{4.1}) over 500 replications are listed in Table~\ref{tab4}.
While the proposed methodology works well, the accuracy is slightly lower than that  integer cointegration orders. See the examples above. We also notice that
the estimation errors with $d=3/4$ are greater than those with $d=4/5$.

To illustrate the impact of the choice of $j_0$ on the estimation, we
consider the above fractional
cointegration with $p=6, \, r=4$
and order $d=4/5, \, 3/4$ and $2/3$. By setting  sample size $n=1000$ and $j_0$ between
5 and 100,
the relative frequencies for the occurrence of the event $\{ \wh r =r\}$ and
the mean of the distance (\ref{4.1}) are reported in Table~\ref{tab5}.
As mentioned in Section 2, using different values of $j_0$ hardly changes the results.

\renewcommand{\baselinestretch}{1}
\begin{table}[hptb]
\centering
\caption{\small
Relative  frequencies (RF) of the occurrence of event $\{\wh r = r\}$ and
 average distance $D_1$ defined in (\ref{4.1}) in simulation
with 500 replications for Example 4.} \label{tab4}
\vskip3mm
{
\small
\begin{tabular}{|c|c|c|c|c|c|c|c|c|c|c|c|c|c|}
\hline
 & & \multicolumn{2}{c|}{$n$=200}& \multicolumn{2}{c|}{$n$=300} 
 & \multicolumn{2}{c|}{$n$=500}& \multicolumn{2}{c|}{$n$=1000} & \multicolumn{2}{c|}{$n$=1500}& \multicolumn{2}{c|}{$n$=2000}\\
 \cline{3-14}
\raisebox{1ex}{$d$} & \raisebox{1ex}{$(p, \; r)$} & RF& $D_1$ &RF& $D_1$ &RF& $D_1$ & RF& $D_1$ &RF& $D_1$ &RF& $D_1$ \\
 \hline
&(3, 2) &.828 &.134  &.948 &.068  
& .978 &.040 &1.00 &.017 &.998 &.014 &1.00 &.010 \\
 \cline{2-14}
4/5& (6, 2)&.020 &.664  &.240 &.507  
&.664 &.294 &.946 &.119 &.966 &.101 &.986 &.070\\
  \cline{2-14}
&(9, 3)  &0 &.721   &.004 &.656   
&.188 &.488 &.766 &.250 &.868 &.181 &.920 &.156 \\
  \cline{2-14}
&(12, 4)   &0 &.743 &0 &0.701  
&.014 &.596 &.528 &.380 &.716 &.307 &.788 &.275\\
\hline
&(3, 2)  &.770 &.174   &.902 &.098   
&.964 &.058 &.984 &.033 &1.00 &.019 &.998 &.017 \\
\cline{2-14}
3/4& (6, 2) &.018 &.685   &.132 &.578    
&.488 &.380 &.866 &.193  &.916 &.151  &.942 &.118\\
\cline{2-14}
&(9, 3) &0 &.733   &0 &.680 
&.104 &.549 &.604 &.336 &.800 &.240 &.864 &.205\\
\cline{2-14}
&(12, 4) &0 &.754   &0 &.719   
&.006 &.629 &.328 &.450 &.606 &.378 &.696 &.344 \\
 \hline
\end{tabular}
}
\end{table}

\renewcommand{\baselinestretch}{1}
\begin{table}[hptb]
\centering
\caption{\small
 Relative  frequencies (RF) of the occurrence of event $\{\wh r = r\}$ and
 average distance $D_1$ defined in (\ref{4.1}) with $n=1000$ in simulation with $500$ replications for Example 4.} \label{tab5}
\vskip3mm
{\small 
\begin{tabular}{|c|c|c|c|c|c|c|c|c|c|c|c|c|}
\hline
 &\multicolumn{2}{c|}{$j_0$=5}& \multicolumn{2}{c|}{$j_0$=10} & \multicolumn{2}{c|}{$j_0$=15}& \multicolumn{2}{c|}{$j_0$=20}& \multicolumn{2}{c|}{$j_0$=50}&\multicolumn{2}{c|}{$j_0$=100}\\
 \cline{2-13}
\raisebox{1ex}{$d$} & RF& $D_1$ &RF& $D_1$ &RF& $D_1$ &RF& $D_1$&RF& $D_1$&RF& $D_1$\\
 \hline
4/5 &.964 &.086   &.984 &.069  &.982  &.062  &.982 &.064  &.982 &.054  &.980 &.057\\
 \hline
3/4 &.934 &.125  &.950 &.107   &.952  &.101  & .956 &.091   &.954 &.082  &.960 &.084  \\
  \hline
2/3&.788 &.226 &.788  &.209  &.788 &.199  &.804  &.195  &.814 &.171  &.806 &.179\\
 \hline
\end{tabular}
}
\end{table}

\noindent
{\bf Example 5.} We consider the 8 monthly US Industrial Production  indices for
January 1947 -- December 1993 published by the US Federal Reserve,
namely {\sl the total index, manufacturing
index, durable manufacturing, nondurable manufacturing, mining,
utilities, products} and {\sl materials}. The original 8 time series
are plotted in Fig.\ref{fig10}. Applying the proposed
method to these data, the transformed series $\wh \bx_t = \wh \bA' \by_t $ are plotted
in Fig.\ref{fig11} together with their sample ACF. The proposed method (\ref{ACF}) leads to
$\wh r =4$ with $m=40, \, c=0.3$ and $j_0=50$ or $100$.

\begin{figure}[htbb]
\centering
 \includegraphics[height=2.8in,width=5in]{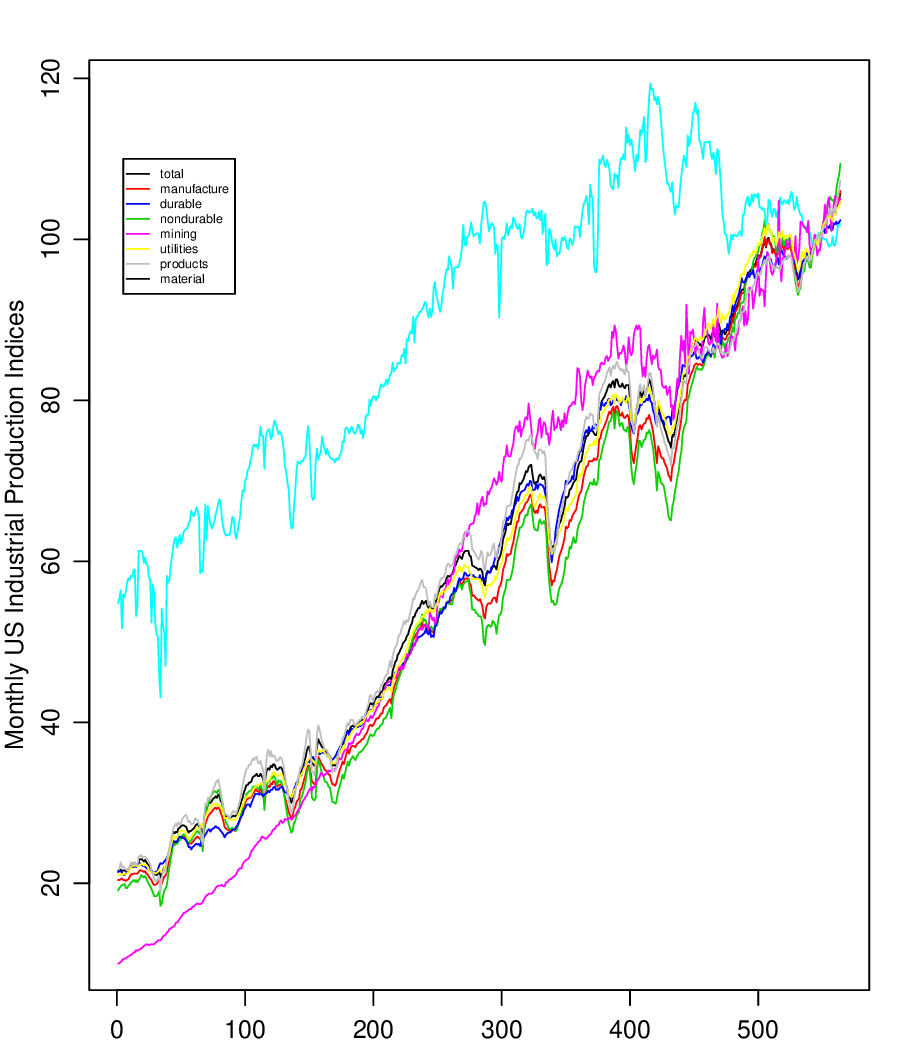}
\caption{Time series plots of the 8 monthly U.S. Industrial Production  indices
in January 1947 - December 1993.} \label{fig10}
\end{figure}

\begin{figure}[htbp]
\centering
\includegraphics[height=4in,width=5.5in]{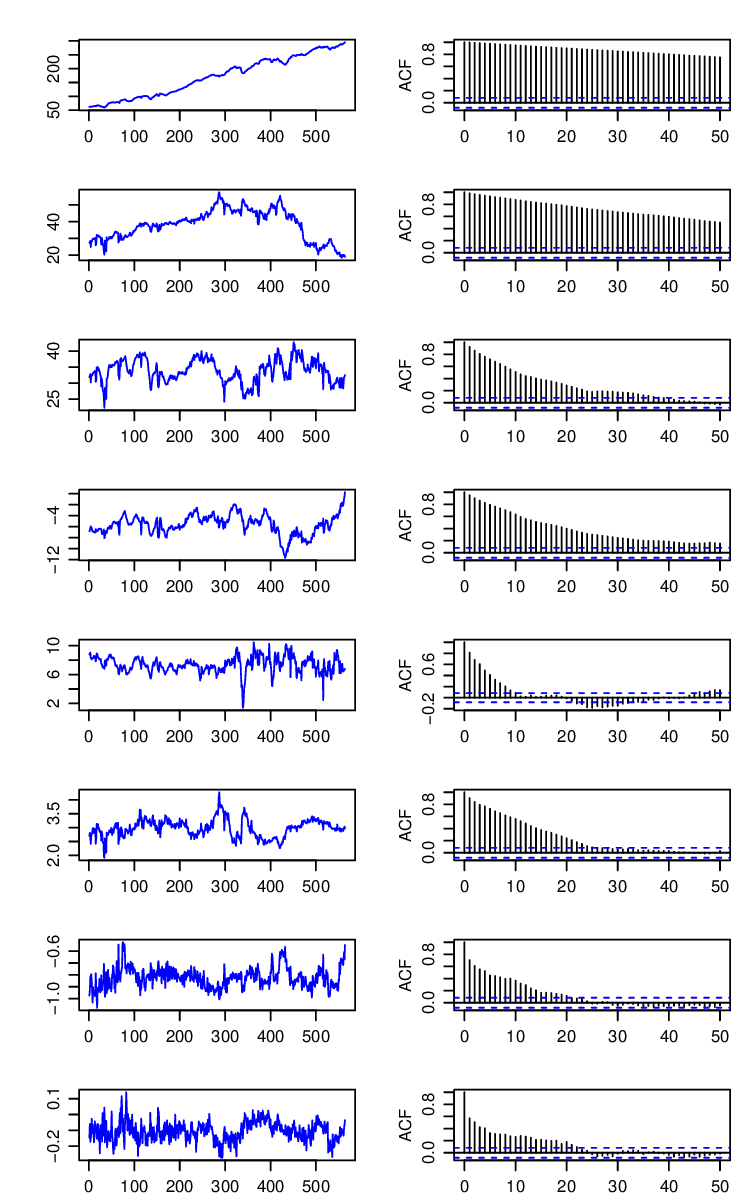}
\caption{Time series plots of the estimated $\wh \bx_t $ by the proposed method
and their sample ACF for the 8 monthly U.S. Industrial Production  indices.} \label{fig11}
\end{figure}

We also apply Johansen's (1991) likelihood method to this data set. Both the
trace and the maximum tests indicate $r=4$. The corresponding transformed
series together with their sample ACF are plotted in Fig.\ref{fig12}.

Let $\wh\bA_2$ denote the last 4 columns of $\wh \bA$ and $\wh\bB_2$ consist
of the loadings for the last 4 component series displayed in Fig.\ref{fig12},
i.e., the columns of $\wh\bA_2$  are the loadings of the 4
cointegrated variables identified by the proposed method in this paper, and
the columns of $\wh\bB_2$ are the loadings of the 4 cointegrated variables identified by
Johansen's likelihood method. Then
$$
D_1 ( \calM(\wh \bA_2), \calM ( \wh \bB_2) )^2 = 1 - {1\over 4} \tr\{\wh \bA_2 \wh \bA_2'
\wh\bB_2(\wh\bB_2'\wh\bB_2)^{-1} \wh\bB_2'\} = 1 - 0.9816 = 0.0184.
$$
This indicates that the two sets of cointegrated variables identified
by the two methods are effectively equivalent.

\begin{figure}[htbp]
\centering
\includegraphics[height=4in,width=5.5in]{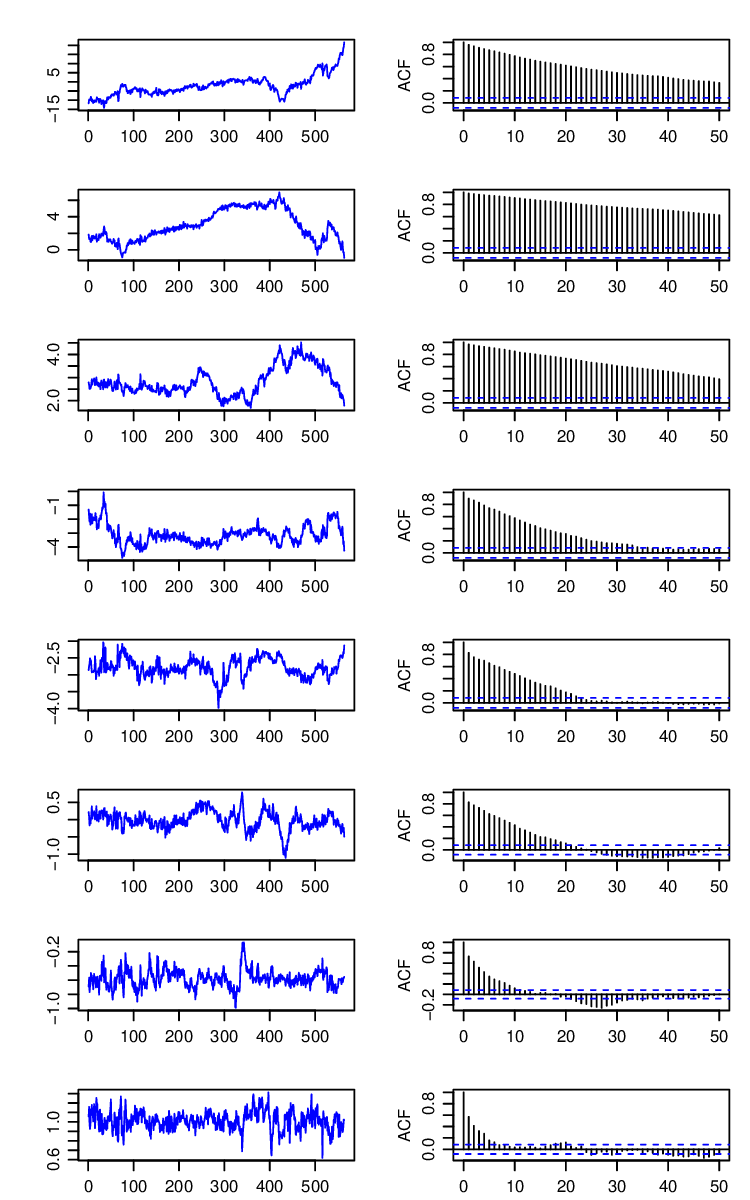}
\caption{Time series plots of the estimated $\wh \bx_t $ by Johansen's method
and their sample ACF for the 8 monthly U.S. Industrial Production  indices.} \label{fig12}
\end{figure}

To illustrate the impact of the choice of $c_0$ on the
estimation, we consider model (\ref{2.1}) with $p=2$ and the
following three specifications for $\bx_t$:
\begin{itemize}
   \item[](i) $r=0$, both  components of $\bx_{t}$ are ARIMA(1, 1, 1) processes with coefficient  $(0.6, 0.4)$ and $(0.8, 0)$,

    \item[] (ii) $r=1$, $x_{t1}$ is ARIMA(1, 1, 1) with $(0.6, 0.4)$ and
$x_{t2}$ is AR(1) with coefficient $0.6$,

     \item[] (iii) $r=2$,  $x_{t1}$ is AR(1)  with coefficient $0.6$ and
$x_{t2}$ is ARMA(1, 1) with coefficient $(0.6,  0.4)$.
    \end{itemize}
The
elements of $\bA$ are generated independently from $U(-3, 3)$ and $c_0$ is
taken from $0.05$ to $0.95$.
In each setting, we replicate the simulation 500 times with sample size
$n=200, 300, 500$ and $1000$. The relative frequencies for the occurrence
of the event $\{ \wh r =r\}$ are reported in Table 6.
When $r=0$,  smaller $c_0$ would lead to better performance,
however when $r=2$, larger $c_0$ may result in better performance. It is
because that as $r=0$, both the components are I(1),  smaller $c_0$ tends
to estimate $ r$ as $0$, while as $r=2$, both the components are
I(0), larger $c_0$ tends to estimate $ r$ as $2$, see Remark 1. Further,
it is shown that when $c_0$ is taken away from the endpoints, say $c_0\in
(0.2, 0.5)$, then the proposed procedure works well for all cases,
especially when $n$ is large.

Table 7 reports the simulation results with $p=3$, $\bA$ generated in
the same manner as the above, and three settings for $\bx_t$:
\begin{itemize}
   \item[](i) $r=0$, the components of $\bx_t$ are all ARIMA(1, 1, 1)
with coefficients
$(0.6, 0), \, (0.3, 0.7)$ and $(0.8, 0.4)$,
\item[](ii) $r=1$,
$x_{t1}$ and $x_{t2}$ are both ARIMA(1, 1, 1) with coefficients $(0.5, 0),
\, (0.8, 0.4)$, and $x_{t3}$ is AR(1) with coefficient $0.6$,
\item[](iii) $r=2$, $x_{t1}$ is ARIMA(1, 1, 1) with coefficient $(0.8, 0.4), \, x_{t2}$
is AR(1) with coefficient $0.6$ and $x_{t3}$ is ARMA(1, 1) with
coefficient $(0.5, 0.5)$.
\end{itemize}
The pattern of Table 7 is very similar to that of Table 6, i.e.
the estimation is stable for $c_0\in (0.2, 0.5)$.

\renewcommand{\baselinestretch}{1}
\begin{table}[hptb]
\centering
\caption{\small Relative  frequencies (RF) of the occurrence of event $\{\wh r = r\}$ for $p=2$ with different $c_0$ and 500 replications.} \label{tab6}
\vskip3mm
{\tiny
\begin{tabular}{|c|c|c|c|c|c|c|c|c|c|c|c|c|c|c|c|c|c|c|c|c|}
\hline
 r& n &.05 &.10 &.15 &.20 &.25 &.30 &.35 &.40 &.45 &.50 &.55 &.60 &.65 &.70 &.75 &.80 &.85 &.90 &.95\\
 \hline
 &200 &1.00 &1.00 &1.00 &.996 &.988 &.976 &.946 &.926 &.872 &.814 &.722 &.616 &.480 &.350 &.218 &.112 &.036 &.002 &0\\
 \cline{2-21}
 &300 &1.00 &1.00 &1.00 &1.00 &1.00 &1.00 &.998 &.996 &.984 &.958 &.932 &.880 &.796 &.672 &.522 &.364 &.180 &.034 &0\\
 \cline{2-21}
 0 &500 &1.00 &1.00 &1.00 &1.00 &1.00 &1.00 &1.00 &1.00 &1.00 &.998 &.994 &.990 &.980 &.946 &.868 &.754 &.546 &.232 &.016\\
 \cline{2-21}
 &1000 &1.00 &1.00 &1.00 &1.00 &1.00 &1.00 &1.00 &1.00 &1.00 &1.00 &1.00 &1.00 &1.00 &1.00 &.998 &.988 &.922 &.696 &.226\\
 \hline
 \hline
 &200 &.648 &.878 &.966 &.992 &.992 &.994 &.988 &.976 &.968 &.952 &.936 &.902 &.838 &.766 &.642 &.538 &.416 &.264 &.074\\
 \cline{2-21}
 &300 &.514 &.864 &.984 &.998 &1.00 &1.00 &1.00 &1.00 &.998 &.992 &.984 &.974 &.952 &.930 &.858 &.752 &.620 &.398 &.152\\
  \cline{2-21}
  1&500 &.442 &.882 &.990 &.998 &.998 &1.00 &.998 &.998 &.998 &.996 &.994 &.994 &.988 &.982 &.952 &.910 &.844 &.636 &.336\\
  \cline{2-21}
  &1000 &.210 &.900 &.992 &.998 &.998 &.998 &.998 &.998 &.998 &.998 &.998 &.996 &.996 &.996 &.994 &.990 &.980 &.910 &.634\\
  \hline
  \hline
  &200 &.230 &.616 &.866 &.970 &.994 &.998 &1.00 &1.00 &1.00 &1.00 &1.00 &1.00 &1.00 &1.00 &1.00 &1.00 &1.00 &1.00 &1.00\\
  \cline{2-21}
  &300 &.150 &.568 &.884 &.980 &.990 &1.00 &1.00 &1.00 &1.00 &1.00 &1.00 &1.00 &1.00 &1.00 &1.00 &1.00 &1.00 &1.00 &1.00\\
  \cline{2-21}
  2 &500 &.088 &.502 &.910 &.992 &1.00  &1.00 &1.00 &1.00 &1.00 &1.00 &1.00 &1.00 &1.00 &1.00 &1.00 &1.00 &1.00 &1.00 &1.00\\
  \cline{2-21}
  &1000 &.018 &.558 &.978 &1.00  &1.00  &1.00 &1.00 &1.00 &1.00 &1.00 &1.00 &1.00 &1.00 &1.00 &1.00 &1.00 &1.00 &1.00 &1.00\\
  \hline
\end{tabular}
}
\end{table}

\renewcommand{\baselinestretch}{1}
\begin{table}[hptb]
\centering
\caption{\small Relative  frequencies (RF) of the occurrence of event $\{\wh r = r\}$ for $p=3$ with different $c_0$ and 500 replications.} \label{tab7}
\vskip3mm
{\tiny
\begin{tabular}{|c|c|c|c|c|c|c|c|c|c|c|c|c|c|c|c|c|c|c|c|c|}
\hline
 r& n &.05 &.10 &.15 &.20 &.25 &.30 &.35 &.40 &.45 &.50 &.55 &.60 &.65 &.70 &.75 &.80 &.85 &.90 &.95\\
 \hline
 &200 &.986 &.982 &.962 &.938 &.896 &.826 &.744 &.608 &.490 &.370 &.276 &.178 &.096 &.036 &.008 &.002 &0 &0 &0
\\
 \cline{2-21}
 &300 &1.00 &1.00 &.998 &.998 &.994 &.982 &.962 &.926 &.866 &.806 &.702 &.562 &.390 &.238 &.118 &.040 &.006 &0 &0\\
 \cline{2-21}
 0 &500 &1.00 &1.00 &1.00 &1.00 &1.00 &1.00 &1.00 &.998 &.994 &.986 &.972 &.940 &.864 &.738 &.526 &.316 &.112 &.008 &0\\
 \cline{2-21}
 &1000 &1.00 &1.00 &1.00 &1.00 &1.00 &1.00 &1.00 &1.00 &1.00 &1.00 &1.00 &1.00 &.998 &.994 &.978 &.876 &.692 &.282 &.004\\
 \hline
 \hline
 &200 &.732 &.932 &.958 &.968 &.954 &.932 &.908 &.860 &.788 &.718 &.632 &.530 &.430 &.308 &.188 &.096 &.032 &0 &0\\
 \cline{2-21}
 &300 &.584 &.900 &.986 &1.00 &1.00 &.996 &.988 &.974 &.956 &.924 &.880 &.816 &.724 &.600 &.418 &.268 &.114 &.022 &0\\
  \cline{2-21}
  1&500 &.456 &.896 &.984 &.994 &.998 &1.00 &1.00 &.994 &.990 &.988 &.988 &.972 &.944 &.908 &.824 &.676 &.446 &.198 &.032\\
  \cline{2-21}
  &1000 &.258 &.900 &.996 &.996 &.998 &.998 &1.00 &.998 &.998 &.996 &.996 &.996 &.994 &.990 &.990 &.962 &.884 &.626 &.194\\
  \hline
  \hline
  &200 &.288 &.780 &.964 &.998 &1.00 &1.00 &1.00 &1.00 &1.00 &1.00 &.998 &.990 &.962 &.942 &.886 &.828 &.724 &.522 &.252\\
  \cline{2-21}
  &300 &.448 &.814 &.944 &.990 &.998 &.998 &.998 &.994 &.992 &.982 &.962 &.934 &.878 &.820 &.756 &.666 &.500 &.322 &.126\\
  \cline{2-21}
  2 &500 &.210 &.786 &.982 &1.00 &1.00 &1.00 &1.00 &1.00 &1.00 &1.00 &1.00 &1.00 &1.00 &.998 &.978 &.950 &.892 &.726 &.420\\
  \cline{2-21}
  &1000 &.096 &.848 &.996 &1.00  &1.00  &1.00 &1.00 &1.00 &1.00 &1.00 &1.00 &1.00 &1.00 &1.00 &1.00 &1.00 &.996 &.960 &.714\\
  \hline
\end{tabular}
}
\end{table}

\section{Conclusions}
\setcounter{equation}{0}

We propose in this paper a simple,  direct and model-free method  for
identifying cointegration relationships among multiple
time series of which different components series may have different
integration orders. The method boils down to an eigenanalysis
for a non-negative definite matrix.
One may view that the components of the transformed series $\wh \bx_t =
\wh \bA' \by_t$ are arranged in  ascending order according to the ``degree''
of stationarity; reflected by the magnitude of the eigenvalues of $\wh \bW$.

\section{Appendix: Technical proofs}
\label{sec7}
\setcounter{equation}{0}

\subsection{Proof for Section \ref{sec31}}
Let
\beqn \label{6.14}\bSigma^{x}_j&=& \mathrm{diag}\left[\left({1\over n}\sum_{t=1}^{n-j}(\bold{x}_{t+j, 1}-\overline{\bold{x}}_{1})
(\bold{x}_{t1}-\overline{\bold{x}}_{1})'\right), \left({1\over n}\sum_{t=1}^{n-j} (\bold{x}_{t+j, 2}-\overline{\bold{x}}_{2})
(\bold{x}_{t2}-\overline{\bold{x}}_{2})'\right)\right]\nn\\
&\equiv &\mathrm{diag}(\bSigma_{j1}^x, \bSigma_{j2}^x),
\nn\eeqn
$\bW^{x}=\sum_{j=0}^{j_0}\bSigma^{x}_j(\bSigma^{x}_j)'=:\mathrm{diag}(\bD_1^x, \bD_2^x)$ and
 $\bGamma_x$ be the $p\times p$ orthogonal matrix such that
  \beqn \bW^{x}\bGamma_x=\bGamma_x \bLambda_x,\nn\eeqn
  where $\bLambda_x$ is the diagonal matrix of eigenvalues of $\bW^{x}$.
Since  $\bold{x}_{t1}$ is  nonstationary  and $\bold{x}_{t2}$ is  stationary,  intuitively ${1\over n}\sum_{t=1}^{n-j}(\bold{x}_{t+j, 1}-\overline{\bold{x}}_{1})$
$(\bold{x}_{t1}-\overline{\bold{x}}_{1})'$ and ${1\over n}\sum_{t=1}^{n-j}(\bold{x}_{t+j,2}-\overline{\bold{x}}_{2})
(\bold{x}_{t2}-\overline{\bold{x}}_{2})'$ do not share the same  eigenvalues, so $\bGamma_x$ must be  block-diagonal. Define $\bW^{y}=\bA \bW^{x}\bA'$, then
\beqn \bW^{y}=\bA \bW^{x}\bA'=\bA\bGamma_x\bLambda_x \bGamma_x' \bA'.\nn\eeqn
This implies that the columns of $\bA\bGamma_x$ are just the orthogonal eigenvectors of  $\bW^{y}.$ Since $\bGamma_x$ is  block-diagonal, it follows that
 $\mathcal{M}(\bA_2)$ is the same as the space spanned by  the eigenvectors corresponding to the smallest $r$ eigenvalues of $\bW^{y}.$ As a result,
to show the distance between the cointegration space and its estimate is small,
we only need to show that  the space spanned by the eigenvectors of $\bW^{y}$
can be approximated by that of $\hat{\bW}$. This question is usually solved by  perturbation matrix theory. In particular, let
\beqn \hat{\bW}=\bW^{y}+\Delta \bW^{y}, \, \, \,  \Delta \bW^{y}=\hat{\bW}-\bW^{y},\nn\eeqn
and
\beqn \mathrm{sep}(\bD_1^x, \bD_2^x)=\min_{\lambda\in \lambda(\bD_1^x), \, \mu\in \lambda(\bD_2^x)}|\lambda-\mu|,\nn\eeqn
where $\lambda(\bA)$ denotes the  set of eigenvalues of a matrix $\bA.$
When $||\Delta \bW^{y}||=o_p(\mathrm{sep}(\bD_1^x, \bD_2^x))$,  one can use
 the perturbation results of Golub and Loan (1996) to establish the bound of Theorems 1, 3 and 5, see also Lam and Yao (2012) or Chang, Guo and Yao (2017). However, in our setting $\mathrm{sep}(\bD_1^x, \bD_2^x)$ can be of smaller order than $||\Delta \bW^{y}||$, i.e., $\mathrm{sep}(\bD_1^x, \bD_2^x)/||\Delta \bW^{y}||\stackrel{p}{\longrightarrow} 0$ as $n\rightarrow\infty$ and the
above method will not work.

 To fix this problem, we  adopt the perturbation results of Dopico, Moro
and Molera (2000) instead. A similar idea was  used by Chen and Hurvich
(2006) to recover their fractional cointegration spaces via the
periodogram matrix, using a   random diagonal  block matrix instead.
However, because of the quadratic form of $\bW^x$
($=\sum_{j=1}^{j_0}\bSigma_j^x(\bSigma_j^x)'$), we cannot find a
normalizing constant matrix  $\bC_n$ such that $\bC_n\bW^x\bC_n=O_e(1)$ or
$\bC_n\bW^y\bC_n=O_e(1)$, so as a result, the argument of Chen and Hurvich (2006)
based on the perturbation bound of Barlow and Slapnicar (2002) cannot be used.
To this end, we first establish some lemmas (i.e. Lemmas 7-10  below) and we
legate their proofs to  supplementary material.

For $1\leq i\leq p-r$, set
 $f_{0}^{i}(t)=W^{i}(t), \, f_{d_i}^{i}(t)=\int_{0}^{t}f_{d_i-1}^{i}(s)\, dt, \, \mu_i=\mathrm{E}z_t^i$ and define
\beqn F^i(t)&=&f_{d_i}^i(t)-\int_{0}^{1}f_{d_i}^i(t)\, dt, \quad G_{d}(t)={\prod_{j=0}^{d-1}(t+j)\over d !}, \quad \bar{G}_{d}={1\over n}\sum_{t=1}^{n}G_{d}(t).\nn\eeqn
  Then, we have the following weak convergence result for the sample autocovariance.

  \begin{lemma} Let $L_{d}(t)=G_{d}(t)-\bar{G}_{d}$.
  Suppose $x_t^{i}\sim I(d_i), \, 1\leq i\leq p-r$, then under  Condition 1,
  \begin{equation}  \label{l7a} \Big({{x_t^i-\bar{x}^i-\mu_i L_{d_i}(t)}\over n^{d_i-1/2}}, \, 1\leq i\leq p-r\Big)\stackrel{d}{\longrightarrow}\Big(F^{i}(t), \, 1\leq i\leq p-r\Big) \, \, \hbox{and}\end{equation}
\begin{equation} \label{l7b}  \Big({1\over n^{d_i+1/2}}\sum_{t=1}^{n}(x_t^i-\bar{x}^i-\mu_iL_{d_i}(t))(x_t^j-\mathrm{E}x_t^j), i\leq p-r, p-r+1\leq j\leq p\Big)\stackrel{p}{\longrightarrow}
\bold{0}. \end{equation}

\end{lemma}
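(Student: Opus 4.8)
The plan is to strip the deterministic polynomial trend off each nonstationary coordinate and then reduce both assertions to the functional CLT in Condition 1(i) through the continuous mapping theorem. Write $z_t^i=\mu_i+\tilde z_t^i$, where $\tilde z_t^i$ is the mean-zero $I(0)$ innovation driving $x_t^i$, and use $\nabla^{-d_i}\big(c\,\idv(t\ge1)\big)_t=c\,G_{d_i}(t)$ for any constant $c$ to record the exact splitting
\[ x_t^i=\mu_i G_{d_i}(t)+R_t^i,\qquad R_t^i:=\nabla^{-d_i}(\tilde z^{i,+})_t,\qquad 1\le i\le p-r, \]
so that $\bar x^i=\mu_i\bar G_{d_i}+\bar R^i$ and hence $x_t^i-\bar x^i-\mu_i L_{d_i}(t)=R_t^i-\bar R^i$ identically; the subtraction of $\mu_i L_{d_i}(t)$ is exactly what removes the trend and leaves only the stochastic $I(d_i)$ part. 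Everything then becomes a statement about the centred $d_i$-fold iterated partial sum $R_t^i-\bar R^i$ of the $I(0)$ array $\tilde z_t^i$.

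For \eqref{l7a}: since $R_{[ns]}^i=\sum_{u=1}^{[ns]}\binom{[ns]-u+d_i-1}{d_i-1}\tilde z_u^i$ and $\binom{[ns]-u+d_i-1}{d_i-1}=([ns]-u)^{d_i-1}/(d_i-1)!+O((ns)^{d_i-2})$ termwise, the rescaled process $n^{-(d_i-1/2)}R_{[n\cdot]}^i$ equals, up to a uniformly negligible remainder, the Riemann integral of $S_n^i(\cdot)$ iterated $d_i-1$ times. The map sending a vector of paths to the vector of its coordinatewise iterated integrals is continuous from $D^{p-r}(0,1)$ to $C^{p-r}(0,1)$ for the $J_1$ topology, because integration against Lebesgue measure is a continuous operation on $D(0,1)$ and one simply iterates. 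Condition 1(i) and the continuous mapping theorem then give joint weak convergence of $n^{-(d_i-1/2)}R_{[n\cdot]}^i$ to the corresponding iterated integral of $W^i$; subtracting the sample mean, which is the continuous functional $g\mapsto\int_0^1 g$, turns this into the centred limit $F^i$, jointly over $i$, which is \eqref{l7a}.

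For \eqref{l7b}, fix $i\le p-r<j\le p$ and set $b_t=x_t^j-\mathrm{E}x_t^j$ (a mean-zero $I(0)$ coordinate of $\bx_{t2}$) and $B_t=\sum_{s=1}^t b_s$. The naive estimate $\sum_t|R_t^i-\bar R^i|\,|b_t|=O_p(n^{d_i-1/2})\,O_p(n)=O_p(n^{d_i+1/2})$ sits exactly at the critical rate, so the cancellation in $\sum_t b_t$ must be used. I would apply Abel summation, $\sum_{t=1}^n R_t^i b_t=R_n^iB_n-\sum_{t=1}^{n-1}(R_{t+1}^i-R_t^i)B_t$, noting $\bar R^i\sum_t b_t=O_p(n^{d_i-1/2})O_p(n^{1/2})=O_p(n^{d_i})$ and, from \eqref{l7a} and from Condition 1(i) applied to the $\bx_{t2}$-block, $\sup_{t\le n}|R_t^i|=O_p(n^{d_i-1/2})$ and $\sup_{t\le n}|B_t|=O_p(n^{1/2})$, so $R_n^iB_n=O_p(n^{d_i})$. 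Since $R_{t+1}^i-R_t^i=\nabla^{-(d_i-1)}(\tilde z^{i,+})_{t+1}$ is an $I(d_i-1)$ object, when $d_i\ge2$ one gets $\sup_{t\le n}|R_{t+1}^i-R_t^i|=O_p(n^{d_i-3/2})$ (again Condition 1(i) and the continuous mapping theorem), whence $\big|\sum_{t=1}^{n-1}(R_{t+1}^i-R_t^i)B_t\big|\le n\cdot O_p(n^{d_i-3/2})\cdot O_p(n^{1/2})=O_p(n^{d_i})$; altogether $\sum_t(R_t^i-\bar R^i)b_t=O_p(n^{d_i})=o_p(n^{d_i+1/2})$, which gives \eqref{l7b} for $d_i\ge2$.

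The delicate point, and the main obstacle, is the borderline case $d_i=1$ in \eqref{l7b}: there $R_{t+1}^i-R_t^i=\tilde z_{t+1}^i$ is itself $I(0)$, so its supremum over $t\le n$ is no longer $O_p(1)$ and the crude bound above fails to beat $n^{d_i+1/2}=n^{3/2}$. Here I would rearrange once more, $\sum_{t=1}^{n-1}\tilde z_{t+1}^iB_t=Z_n^iB_n-\sum_{t=1}^{n-1}b_tZ_t^i$ with $Z_t^i=\sum_{s=1}^t\tilde z_s^i$, and prove $\sum_{t=1}^{n-1}b_tZ_t^i=O_p(n)$; this is the only place a genuine (if routine) argument is required, and it follows either from a second-moment bound exploiting absolute summability of the cross-autocovariances of $(\tilde z_t^i,b_t)$, or from the standard weak convergence of the discrete stochastic integral $\sum_t b_tZ_{t-1}^i$ to $\int_0^1 W^i\,dW^j$ together with a one-sided long-run covariance correction, both available under the moment/mixing structure underlying Condition 1. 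Combined with $Z_n^iB_n=O_p(n)$, this yields $\sum_t\tilde z_{t+1}^iB_t=O_p(n)=o_p(n^{3/2})$ and closes the case.
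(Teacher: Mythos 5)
Your overall strategy is sound and, as far as one can tell, coincides in spirit with the paper's: the paper relegates the proofs of Lemmas 7--10 to a supplementary file, but the exact algebraic splitting $x_t^i=\mu_iG_{d_i}(t)+R_t^i$ with $R_t^i=\nabla^{-d_i}(\tilde z^{i,+})_t$, which turns $x_t^i-\bar x^i-\mu_iL_{d_i}(t)$ into the purely stochastic quantity $R_t^i-\bar R^i$, is clearly the intended mechanism, and your treatment of (\ref{l7a}) by writing $n^{-(d_i-1/2)}R^i_{[n\cdot]}$ as the $(d_i-1)$-fold iterated Lebesgue integral of $S_n^i$ plus a uniformly negligible remainder and invoking Condition 1(i) with the continuous mapping theorem is the standard and correct argument. (Incidentally, your count of $d_i-1$ integrations is the arithmetically correct one; the paper's recursion $f^i_0=W^i$, $f^i_{d}=\int_0^\cdot f^i_{d-1}$ makes $F^i$ the centred $d_i$-fold integral of $W^i$, which is off by one from the actual limit of $n^{-(d_i-1/2)}(R^i_{[nt]}-\bar R^i)$ --- compare $d_i=1$, where the limit is plainly the centred $W^i$ itself.)

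The one genuine gap is exactly where you flag it: the case $d_i=1$ of (\ref{l7b}). After your two summations by parts everything reduces to showing $\sum_{t}b_tZ_t^i=o_p(n^{3/2})$, and this cannot be extracted from Condition 1 as literally stated. Condition 1(i) is only a joint functional CLT for the partial sums, and Condition 1(ii) only controls sample autocovariances within the $\bx_{t2}$-block; neither says anything about the joint dependence between $\tilde z_t^i$ and $b_t$ that a second-moment bound on the discrete stochastic integral $\sum_tb_tZ_{t-1}^i$ requires. Cauchy--Schwarz gives only the critical rate $O_p(n^{3/2})$, and $J_1$-convergence of $(S_n^i,S_n^j)$ alone does not imply tightness of $n^{-1}\sum_tb_tZ_{t-1}^i$ without a uniform-tightness-type condition on the integrator. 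Your fallback --- summable cross-autocovariances and fourth-order cumulants, or stochastic-integral convergence as in the mixing and linear-process examples listed after Condition 1 --- does close the argument, but it uses structure beyond Condition 1 itself; you should either state that extra hypothesis explicitly or note that the lemma is being proved under the dependence structures those examples supply. Everything else, including the $d_i\ge2$ branch of (\ref{l7b}) and the uniform bounds $\sup_{t\le n}|R_t^i|=O_p(n^{d_i-1/2})$ and $\sup_{t\le n}|B_t|=O_p(n^{1/2})$, checks out.
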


Next, we  establish a bound for the eigenvalues of $\bSigma_j^x$
and $\bA'\hat{\bSigma}_j\bA=:\hat{\bSigma}_j^{x}$.

Without loss of generality, we assume the first $r_1$ components of $\bold{x}_{t1}$ are $I(a_1)$, the next $r_2$ components  are $I(a_2)$  and the last $r_q$ components  of $\bold{x}_{t1}$ are $I(a_q)$, that is,
\beqn\boldx_{t1}=( \stackrel{I(a_q)}{\overbrace{x_{t}^{1}, \cdots, x_t^{r_q}}}, \, \, \, \stackrel{I(a_{q-1})}{\overbrace{x_{t}^{r_q+1}, \cdots, x_{t}^{r_q+r_{q-1}}}}, \, \cdots, \stackrel{I(a_1)}{\overbrace{x_{t}^{\sum_{j=2}^{q}r_{j}+1}, \cdots, x_t^{\sum_{j=1}^{q}r_{j}}}} )',\nn\eeqn where $ a_1<a_2<\cdots<a_q$ are positive integers and $ \sum_{i=1}^{q}r_i=p-r.$
 For  $ 1\leq i\leq q$,  define
  $\nu_q=0$ and $\nu_i=\sum_{j=i+1}^{q}r_{j}$. Then for any
$\bx_t(r_i):=(x_{t}^{\nu_i+1}, \cdots, x_{t}^{\nu_i+r_i})',$  if $\bmu_i:=(\mu_{\nu_i+1}, \cdots, \mu_{\nu_i+r_i})'\neq 0$,
there must exist a $r_i\times (r_i-1)$ matrix $\bP_i$ and  $r_i\times 1$ vector $\bar{\bmu}_i$  such that
$\bP'_i\bP_i=\bold{I}_{(r_{i}-1)}, \, (\bP_i, \bmu_i)$ has full rank $r_i,$
$\, \bP'_i\bmu_i=0$ and
 $\bar{\bmu}'_i\bmu_i=1$, where $\bold{I}_{a}$ denotes $a\times a$ matrix. Let
$\bB_i=(\bP_i, n^{-1/2}\bar{\bmu}_i)'$ if $\bmu_i\neq 0$ and $\bB_i=\bold{I}_{r_i}$ if $\bmu_i= 0,$  and $\bTheta_n=\mathrm{diag}(\bB_q,  \cdots, \bB_2, \bB_1, \bold{I}_{r}).$
Define
\beqn
\bD_{n1}=\mathrm{diag}\Big( \stackrel{r_q}{\overbrace{n^{a_q-1/2}, \cdots, n^{a_q-1/2}}}, \cdots,  \stackrel{r_1}{\overbrace{n^{a_1-1/2}, \cdots, n^{a_1-1/2}}}\Big), \,  \, \bD_{n2}=(\stackrel{r}{\overbrace{1, \cdots, 1}}),  \nn\eeqn
and $\bD_n=:\mathrm{diag}(\bD_{n1}, \bD_{n2}).$ Let $H^{d}(t)= {t^{d}/ d!}- {1/ (d+1)!}$, $F^{i}(t)$ be given as in Lemma 7,
 $\bF_i(t)=(F^{\nu_i+1}(t),  \cdots, F^{\nu_i+r_i}(t))', \,$
  $\bold{M}_i(t)=(\bF'_i(t)\bP_i, H^{a_i}(t))'I(\bmu_i \neq 0)+ \bF_i(t)I(\bmu_i = 0),$ and
$\bold{M}(t)=(\bold{M}'_q(t),  \bold{M}'_{q-1}(t), \cdots,
\bold{M}'_1(t))'.$ Then Lemma 8 below follows from Lemma 7 and the
continuous mapping theorem.

\begin{lemma} Let $\bGamma_j(x)=\hbox{diag}\Big({1\over n}\sum_{t=1}^{n}(\bx_{t1}-\bar{\bx}_1)(\bx_{t1}-\bar{\bx}_1)', \, \mathrm{Cov}(\bx_{1+j,2}, \, \bx_{1,2})\Big).$ Under Condition 1, we have
\beqn \label{6.23} \bD_n^{-1}\bTheta_n\bGamma_j^{x}\bTheta'_n\bD_n^{-1}
\stackrel{d}{\longrightarrow}\mathrm{diag}\Big(\int_{0}^{1}\bold{M}(t)\bold{M}'(t)\, dt, \, \mathrm{Cov}(\bx_{1+j,2}, \, \bx_{1,2})\Big).\nn\eeqn
\end{lemma}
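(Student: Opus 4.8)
The plan is to exploit the block‑diagonal structure of $\bTheta_n$ and $\bD_n$ to reduce the assertion to a functional limit theorem for the normalized nonstationary block $\bx_{t1}$, and then to apply the continuous mapping theorem to the quadratic functional $\phi\mapsto\int_0^1\phi(t)\phi(t)'\,dt$. Since $\bTheta_n=\diag(\bB_1,\dots,\bB_l,\bI_r)$ and $\bD_n=\diag(\bD_{n1},\bD_{n2})$ are block‑diagonal conformably with $\bx_t=(\bx_{t1}',\bx_{t2}')'$, the matrix $\bD_n^{-1}\bTheta_n\bGamma_j^{x}\bTheta_n'\bD_n^{-1}$ is block‑diagonal; its lower block equals $\bD_{n2}^{-1}\mathrm{Cov}(\bx_{1+j,2},\bx_{1,2})\bD_{n2}^{-1}=\mathrm{Cov}(\bx_{1+j,2},\bx_{1,2})$ because $\bD_{n2}=\bI_r$, which is deterministic and already equal to its asserted limit, so joint convergence with the remaining block is automatic. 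It therefore suffices to prove that $\bD_{n1}^{-1}\bTheta_n^{[1]}\big({1\over n}\sum_{t=1}^{n}(\bx_{t1}-\bar{\bx}_1)(\bx_{t1}-\bar{\bx}_1)'\big)(\bTheta_n^{[1]})'\bD_{n1}^{-1}\stackrel{d}{\longrightarrow}\int_0^1\bM(t)\bM(t)'\,dt$, where $\bTheta_n^{[1]}=\diag(\bB_1,\dots,\bB_l)$.

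The core step is to show that the normalized rotated partial‑sum process $\bU_n(s):=\bD_{n1}^{-1}\bTheta_n^{[1]}(\bx_{\lceil ns\rceil,1}-\bar{\bx}_1)$ satisfies $\bU_n\stackrel{J_1}{\Longrightarrow}\bM$ on $D^{p-r}(0,1)$, which I would do group by group. For a group of $I(a_i)$ components with vanishing increment mean ($\bmu_i=0$) one has $\bB_i=\bI_{s_i}$ and the corresponding block of $\bD_{n1}^{-1}$ is $n^{-(a_i-1/2)}\bI_{s_i}$, so Lemma 7(l7a) gives directly that this block of $\bU_n$ converges weakly to $\bF_i(\cdot)=\bM_i(\cdot)$. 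For a group with $\bmu_i\neq0$, I would first record the uniform (in $s\in[0,1]$) deterministic estimate $L_{a_i}(\lceil ns\rceil)=n^{a_i}H^{a_i}(s)+O(n^{a_i-1})$, which follows from $G_d(t)=t^{d}/d!+O(t^{d-1})$ and $\bar{G}_d=n^{d}/(d+1)!+O(n^{d-1})$; combined with Lemma 7(l7a), the group‑$i$ part of $n^{-(a_i-1/2)}(\bx_{\lceil ns\rceil,1}-\bar{\bx}_1)$ equals $\bmu_i\,n^{1/2}H^{a_i}(s)$ plus a process converging weakly to $\bF_i$ plus a uniform $O(n^{-1/2})$. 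Left‑multiplying by $\bB_i=(\bP_i,\,n^{-1/2}\bar{\bmu}_i)'$: the $\bP_i'$‑rows annihilate the leading $\bmu_i$‑term ($\bP_i'\bmu_i=0$) and retain a process converging to $\bP_i'\bF_i$, while the row $n^{-1/2}\bar{\bmu}_i'$ rescales the leading term by exactly $n^{-1/2}$ to $\bar{\bmu}_i'\bmu_i\,H^{a_i}(s)=H^{a_i}(s)$ and knocks the stochastic part down to $O_p(n^{-1/2})$ (using $\|\bB_i\|$ bounded). Hence this block of $\bU_n$ converges weakly to $\big(\bF_i(s)'\bP_i,\,H^{a_i}(s)\big)'=\bM_i(s)$. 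Joint convergence over all groups is then immediate from the joint statement of Lemma 7 (the stochastic fluctuations being a common measurable functional of $\bW$), with the drifts handled exactly as above block by block.

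To conclude, observe that $\int_0^1\bU_n(s)\bU_n(s)'\,ds={1\over n}\sum_{t=1}^{n}\bU_n(t/n)\bU_n(t/n)'$ is precisely the matrix we must control; the functional $\phi\mapsto\int_0^1\phi\phi'$ is continuous at every continuous path, and the limit $\bM$ has a.s.\ continuous sample paths (each $F^i$ is a ($d_i\ge1$)‑fold iterated integral, hence continuous, of $\bW$, and each $H^{d}$ is a polynomial), so the continuous mapping theorem gives $\int_0^1\bU_n\bU_n'\stackrel{d}{\longrightarrow}\int_0^1\bM(t)\bM(t)'\,dt$. Together with the deterministic lower block, this is the lemma.

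I expect the main obstacle to be exactly the case $\bmu_i\neq0$: a nonzero mean of the $I(0)$ increments generates a deterministic polynomial drift of order $n^{a_i}$ that strictly dominates the $n^{a_i-1/2}$ stochastic fluctuation, so a group cannot be normalized coordinate‑wise by a single power of $n$. The rotation $\bP_i$ removes the drift while the $n^{-1/2}$‑scaled row $\bar{\bmu}_i'$ keeps it but at the reduced common order, and the delicate bookkeeping is the uniform‑in‑$s$ control of $L_{a_i}(\lceil ns\rceil)-n^{a_i}H^{a_i}(s)$ together with the verification that, after applying $\bB_i$, each coordinate of $\bU_n$ is $O_p(1)$ with a nondegenerate limit while the residual stochastic term along the drift direction is $o_p(1)$. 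The remaining steps are routine invocations of Lemma 7 and the continuous mapping theorem.
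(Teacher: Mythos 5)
Your proof is correct and follows essentially the same route as the paper, which derives this lemma directly from Lemma 7 and the continuous mapping theorem (with details deferred to supplementary material); your group-by-group treatment of the drift via $\bP_i'\bmu_i=0$ and the $n^{-1/2}\bar{\bmu}_i'$ row is exactly the bookkeeping the construction of $\bTheta_n$ and $\bold{M}(t)$ is designed for.
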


 Let $F^{i}(t), \, 1\leq i \leq p-r$ be defined in Lemma 7, where
$W^{i}(t)=\sigma_{ii}B^i(t)$ and  $B^{i}(t), \,  1\leq i\leq p-r$ are
independent Brownian motions. Let $\bold{F}(t)=(F^{1}(t), F^{2}(t),
\cdots, F^{p-r}(t))'$.

\begin{lemma} Under condition 2 and $p=o(n^{1/2-\tau})$  with $ 0<\tau<1/2,$
\begin{equation}\label{l9a} \Big{\|}\bD_{n}^{-1}\bGamma_j^{x}\bD_{n}^{-1}-\mathrm{diag}\Big(\int_{0}^{1}\bold{F}(t)\bold{F}'(t)\, dt, \, \mathrm{Cov}(\bx_{1+j,2}, \, \bx_{1,2})\Big)\Big{\|}_2=o_p(1).\end{equation}
Further, $\int_{0}^{1}\bold{F}(t)\bold{F}'(t)\, dt$ is positive definite.
\end{lemma}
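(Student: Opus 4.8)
The plan is to reduce (\ref{l9a}) to an operator-norm estimate for the upper-left block only, then establish a Gaussian coupling of the iterated partial sums in which the remainder is controlled in $L^2$ uniformly over the $p-r$ components, and finally assemble the operator-norm bound from elementary Frobenius and Cauchy--Schwarz estimates, the condition $p-r=o(n^{1/2-\tau})$ being consumed in that last step. For the reduction: since $\bD_{n2}=\bI_r$ and the lower-right block of $\bGamma_j^x$ is by definition the population quantity $\mathrm{Cov}(\bx_{1+j,2},\bx_{1,2})$, the $(2,2)$ block of the matrix in (\ref{l9a}) vanishes identically, and both matrices there are block diagonal; hence (\ref{l9a}) is equivalent to
\[
\Big\|\bD_{n1}^{-1}\Big(\tfrac1n\textstyle\sum_{t=1}^{n}(\bx_{t1}-\bar{\bx}_1)(\bx_{t1}-\bar{\bx}_1)'\Big)\bD_{n1}^{-1}-\int_{0}^{1}\bF(t)\bF'(t)\,dt\Big\|_2=o_p(1).
\]

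For the coupling, write $x_t^i=\sum_{s=1}^{t}w_{t-s}^{(d_i)}z_s^i$ with $w_k^{(d)}=\binom{k+d-1}{d-1}$ (the kernel of $\nabla^{-d_i}$), and set $\widetilde x_t^i=\sum_{s=1}^{t}w_{t-s}^{(d_i)}\sigma_{ii}\nu_s^i$, where $\{\nu_s^i\}$ is the array of Condition 2(i), taken independent across $i$ (possible since the components of $\bz_t$ are independent). With $R_m^i:=\sum_{s\le m}(z_s^i-\sigma_{ii}\nu_s^i)$ we have $\sup_i\sup_{0\le t\le1}\mathrm{E}(R_{[nt]}^i)^2=O(n^{2\tau})$. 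Summation by parts and Pascal's identity give $x_t^i-\widetilde x_t^i=\sum_{s=1}^{t}w_{t-s}^{(d_i-1)}R_s^i$, so Cauchy--Schwarz together with $\sum_{k<n}w_k^{(d_i-1)}=O(n^{d_i-1})$ yields $\mathrm{E}(x_t^i-\widetilde x_t^i)^2=O(n^{2d_i-2+2\tau})$ uniformly in $i,t$. Embedding the Gaussian partial sums $\sum_{s\le m}\nu_s^i$ into independent standard Brownian motions $B^i$ (so $W^i(t)=\sigma_{ii}B^i(t)$), a routine comparison of iterated sums with the iterated Riemann integrals $f_{d_i}^i$ of $W^i$ (using independence of Brownian increments across unit intervals, the discrepancy is $O(n^{d_i-1/2})$ in $L^2$) gives $\mathrm{E}(n^{-(d_i-1/2)}\widetilde x_t^i-f_{d_i}^i(t/n))^2=O(n^{-1})$ uniformly; subtracting sample means, which converge in the same way, we obtain
\[
\frac{x_t^i-\bar{x}^i}{n^{d_i-1/2}}=F^i(t/n)+r_t^i,\qquad \sup_i\max_{1\le t\le n}\mathrm{E}(r_t^i)^2=O(n^{2\tau-1}).
\]

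For the assembly, set $\br_t=(r_t^1,\dots,r_t^{p-r})'$ and $\bv_t=\bF(t/n)+\br_t$, so the matrix of interest is $\tfrac1n\sum_t\bv_t\bv_t'$ and
\[
\Big\|\tfrac1n\textstyle\sum_t\bv_t\bv_t'-\int_0^1\bF\bF'\Big\|_2\le \mathrm{(I)}+\tfrac2n\mathrm{(II)}+\tfrac1n\mathrm{(III)},
\]
with $\mathrm{(I)}=\|\tfrac1n\sum_t\bF(t/n)\bF(t/n)'-\int_0^1\bF\bF'\|_2$, $\mathrm{(II)}=\|\sum_t\bF(t/n)\br_t'\|_2$ and $\mathrm{(III)}=\|\sum_t\br_t\br_t'\|_2$. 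Since $\mathrm{(III)}\le\sum_t\|\br_t\|^2$ has expectation $O(n(p-r)n^{2\tau-1})=o(n^{1/2+\tau})$, $\tfrac1n\mathrm{(III)}=o_p(1)$; using $\|\sum_t\ba_t\bb_t'\|_2\le(\sum_t\|\ba_t\|^2)^{1/2}(\sum_t\|\bb_t\|^2)^{1/2}$ and $\mathrm{E}\sum_t\|\bF(t/n)\|^2=O(n(p-r))$ gives $\tfrac2n\mathrm{(II)}=O_p((p-r)n^{\tau-1/2})=o_p(1)$ because $p-r=o(n^{1/2-\tau})$. For $\mathrm{(I)}$, bound $\|\cdot\|_2$ by the Frobenius norm: each $(i,j)$ entry is a Riemann-sum error for $\int_0^1 F^iF^j$ and, for any $\alpha<1/2$, is at most $c_\alpha n^{-\alpha}(\|F^i\|_\infty[F^j]_\alpha+\|F^j\|_\infty[F^i]_\alpha)$ with $[\cdot]_\alpha$ the $\alpha$-H\"older seminorm; since each $F^i$ is a $\sigma_{ii}$-scaled, at most $\max_i d_i$-fold integrated standard Brownian motion with $\sigma_{ii}^2\le b_2$, $\mathrm{E}\|F^i\|_\infty^2$ and $\mathrm{E}[F^i]_\alpha^2$ are bounded uniformly in $i$, whence $\sum_i\|F^i\|_\infty^2=O_p(p-r)$, $\sum_i[F^i]_\alpha^2=O_p(p-r)$, and $\mathrm{(I)}=O_p(n^{-\alpha}(p-r))=o_p(1)$ as soon as $\alpha\in(1/2-\tau,1/2)$. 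Finally, the $F^i$, $1\le i\le p-r$, are independent and each has a non-degenerate Gaussian law on $L^2[0,1]$, which charges no finite-dimensional subspace; an easy induction then shows they are a.s. linearly independent, so $\int_0^1\bF(t)\bF'(t)\,dt$ is positive definite a.s.

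The main obstacle is the coupling step: one must push the $I(0)$-level normal approximation of Condition 2(i) through the $d_i$-fold summation operator while keeping the $L^2$ bound uniform over the diverging index set $\{1,\dots,p-r\}$ — this is exactly what the summation-by-parts identity $x_t^i-\widetilde x_t^i=\sum_s w_{t-s}^{(d_i-1)}R_s^i$ accomplishes — and handle the sum-versus-integral comparison at each of the $d_i$ levels. Once the uniform remainder bound $\sup_{i,t}\mathrm{E}(r_t^i)^2=O(n^{2\tau-1})$ is in hand, the remaining estimates are routine, and the slack in $p-r=o(n^{1/2-\tau})$ is used precisely in bounding $\mathrm{(I)}$, $\mathrm{(II)}$ and $\mathrm{(III)}$.
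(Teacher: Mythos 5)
Your argument is sound and follows essentially the route the paper takes (its detailed proof is relegated to the supplement, but the surviving fragments show the same strategy: replace each $x_t^i$ by its Gaussian-driven counterpart via Condition 2(i), control the replacement in operator norm at the cost of a factor $p$, giving bounds of order $p\,n^{\tau-1/2}$ that are $o_p(1)$ under $p=o(n^{1/2-\tau})$, and identify the Gaussian quadratic form with $\int_0^1\bold{F}\bold{F}'\,dt$); your summation-by-parts identity $x_t^i-\widetilde x_t^i=\sum_s w_{t-s}^{(d_i-1)}R_s^i$ and the a.s.\ linear-independence argument for positive definiteness are both correct. One internal slip: in the sums-versus-integrals comparison you state the unnormalized discrepancy is $O(n^{d_i-1/2})$ in $L^2$, which after dividing by $n^{d_i-1/2}$ would only give $O(1)$, not the $O(n^{-1})$ you then assert; the correct rate (obtainable from the independence of increments over unit blocks, as you indicate) is $O(n^{d_i-1})$, which does yield $O(n^{-1})$ and leaves the rest of the proof intact. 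Note also that your labelling $f_{d_i}^i$ for the limit of the $d_i$-fold cumulative sum inherits the paper's own off-by-one convention ($d_i$-fold summation of $z^i$ corresponds to $(d_i-1)$-fold integration of $W^i$); this does not affect the substance of the argument.
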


\ignore{\color{red} As for (\ref{l9b}),  the positive definition of $\int_{0}^{1}\bold{F}(t)\bold{F}'(t)\, dt$ can be shown similarly to that of Lemma 3.1.1 in Chan and Wei (1988). However, it only shows that the smallest eigenvalue $\lambda_{p-r}$  of  $\int_{0}^{1}\bold{F}(t)\bold{F}'(t)\, dt$ is positive, but it may depend on $p$.  I try the following way as the note I wrote in August, but I think this method doesn't work, because $\int_{0}^{1}\bold{F}(t)\bold{F}'(t)\, dt$ is a random matrix,   $\bx$ should not be taken as fixed. When applying random matrix theory, our setting is completely different, because  $\int_{0}^{1}\bold{F}(t)\bold{F}'(t)\, dt$ cannot be controlled by any diagonal matrix. I have tried and thought this question over and over, but I still can't work it out, I am deeply sorry!

From the definition of the smallest eigenvalue, we see there exists a
$\bx=(x_1, \cdots, x_{p-r})'$ with $|\bx'\bx|=1$ such that
\beqn \label{l9.38}\lambda_p=\bx'\int_{0}^{1}\bold{F}(t)\bold{F}'(t)\, dt\bx
=\int_{0}^{1}(x_1F^{1}(t)+\cdots+x_pF^{p}(t))^2\, dt.\eeqn
 Since all the process $F^{i}(t)$ are Gaussian
 and independent, it follows that
\beqn \varsigma(t):=\sum_{i=1}^{p}x_iF^{i}(t) 
&\sim& N(0, \sigma_t^2), \, \, \hbox{and}\, \, \sigma_t^2=\sum_{i=1}^{p}x_i^2\mathrm{E}[F^{i}(t)]^2,\eeqn
which means that for any $y>0$,
\beqn P\left\{|\varsigma(t)|^{2}<y\right\}\leq \sqrt{2y/\pi}\left(\min_{i}\mathrm{E}[F^{i}(t)]^{2}\right)^{-1}=:\tau_i(t)\sqrt{y}.\eeqn
Thus, for any $\varepsilon>0$, we can find a $t_0$ such that $\tau_i(t_0)>0$ and
\beqn\label{l9.41} P\{|\varsigma(t_0)|^{2}>\varepsilon^{2}/\tau_i(t_0)\}>1-\varepsilon.\eeqn

Further, by (\ref{l9.29}), for any $s\leq t$,
\beqn\varsigma(t)-\varsigma(s) =\sum_{i=1}^{p}x_i\left(\int_{0}^{t}{(t-a)^{d_i-1}\over (d_i-1)!}-\int_{0}^{s}{(s-a)^{d_i-1}\over (d_i-1)!}\right)\, dW^i(a)\sim N(0, g(t-s)),\eeqn where $g(t-s)=O(t-s).$ Since $\varsigma(t)$ are Gaussian process, we have
\beqn \max_{|t_0-s|\leq \nu}|\varsigma(t)-\varsigma(s)|=O_{a.s.}(-\log(\nu)\sqrt(\nu)),\eeqn
which gives that there exists an $\nu_0>0$ such that $-\log(\nu_0)\sqrt(\nu_0)<2^{-1}(\tau_i(t_0))^{-1/2}\varepsilon$ and for any $|s-t_0|<\nu_0$,
\beqn \min_{|s-t_0|<\nu_0}|\varsigma(s)| >|\varsigma(t_0)|+\log(\nu_0)\sqrt(\nu_0), a.s.\eeqn
Thus, by (\ref{l9.38}) and  (\ref{l9.41}), we have
\beqn P\left\{\lambda_p\geq \int_{t_0}^{t_0+\nu_0}\varsigma(s)^2\, ds>4^{-1}(\tau_i(t_0))^{-1}\varepsilon^2\nu_0\right\}
>1-\varepsilon.\eeqn This gives (\ref{l9b}) and completes the proof of Lemma 9.
\end{proof}
}

\ignore{Let $R_l=(r_{ij}^s)_{(p-r)\times (p-r)}, \, s=1, 2, 3$
and
\beqn E_i=\left(\begin{array}{ccc} 0 &
\cdots &0\\
\vdots&\vdots&\vdots\\
0  &\cdots &0\\
{\bar{\bmu}'_i\bmu_i\sum_{t=1}^{n}L_{a_i}(t)(x_{t}^{1}-\bar{x}^{1}-
\xi_{t}^{1}+\bar{\xi}^{1})\over n^{a_{i}+d_{1}+1/2}} &\cdots & {\bar{\bmu}'_i\bmu_i\sum_{t=1}^{n}L_{a_i}(t)(x_{t}^{p}-\bar{x}^{p}-
\xi_{t}^{p}+\bar{\xi}^{p})\over n^{a_{i}+d_{p}+1/2}}
\end{array}\right).\nn
\eeqn
Then $ BR_2=(E'_1, E'_2, \cdots, E'_l)'$.
By (\ref{l7.24}), we have when $p=o(n^{{1/ 2}-\tau})$,
\beqn \label{l7.25}||BR_1B'||_2&\leq& ||BB'||_2||R_1||_2=||BB'||_2O_{p}(pn^{\tau-1/2})=O_{p}(pn^{\tau-1/2})=o_p(1),\nn\\
||BR_2B'||_2&\leq& ||BR_2||_2||B'||_2=O_{p}(pn^{\tau-1/2})=o_p(1).\eeqn
Similarly, we have
\beqn\label{l7.26}||BR_3B'||_2&\leq& ||B||_2 ||R_3B'||_2=O_{p}(pn^{\tau-1/2})=o_p(1).\eeqn
Combining (\ref{l7.25}) and (\ref{l7.26}) gives
\beqn \label{pl6.5.2}\left\| D_{n1}^{-1}\left[ {1\over n}\sum_{t=1}^{n}[B(\bx_{t1}-\bar{\bx}_1)][B(\bx_{t1}-\bar{\bx}_1)]'
- [B(\bxi_{t}-\bar{\bxi})][B(\bxi_{t}-\bar{\bxi})]'\right]D_{n1}^{-1}\right\|_2
=o_p(1).\eeqn

On the other hand, since $\{v_t^i\}$ are i.i.d normal variables, it follows that
$$D_{n1}^{-1}\left[{1\over n}\sum_{t=1}^{n}[B(\bxi_{t}-\bar{\bxi})][B(\bxi_{t}-\bar{\bxi})]'\right]D_{n1}^{-1}$$ has the same distribution as $\int_{0}^{1}\bold{M}(t)\bold{M}'(t)\, dt$. Thus, by (\ref{pl6.5.2}), we have (\ref{pl6.5.1}). \end{proof}}

\ignore{Next, we show (\ref{l6.5b}). To this end, it is enough to show when $\mathrm{E}\bz_t\neq 0$,
\beqn \label{pl6.5.3}\Big{\|}\Delta_{n1}^{-1}\left( {1\over n}\sum_{t=1}^{n}(\bx_{t1}-\bar{\bx}_1)(\bx_{t1}-\bar{\bx}_1)'\right)\Delta_{n1}^{-1}
-\int_{0}^{1}\bold{H}(t)\bold{H}'(t)\, dt \Big{\|}_2 =o_p(1). \eeqn
Using expression  (\ref{6.28a}), then by (\ref{6.29a}) and (\ref{6.30add}), it follows that for any given $1\leq i, j\leq p-r$
\beqn \left|{1\over n^{d_i+d_j+1}}\sum_{t=1}^{n}(x_{t}^{i}-\bar{x}^i)(x_{t}^{j}-\bar{x}^j)-
\int_{0}^{1}H_{d_i}^{i}(t)H_{d_{j}(t)}^j\, dt\right|
=O_{p}(n^{-1/2}).\eeqn
As a result, we have
\beqn\Big{\|}\Delta_{n1}^{-1}\left( {1\over n}\sum_{t=1}^{n}(\bx_{t1}-\bar{\bx}_1)(\bx_{t1}-\bar{\bx}_1)'\right)\Delta_{n1}^{-1}
-\int_{0}^{1}\bold{H}(t)\bold{H}'(t)\, dt \Big{\|}_2=O_{p}(pn^{-1/2})=o_p(1).\eeqn
This gives (\ref{pl6.5.3}) and completes the proof of Lemma 11.}

\begin{lemma}
Under Condition 1, or Condition 2 and $p=o(n^{1/2-\tau})$, we have
\beqn \label{l6.3a} \max_{0\leq j\leq j_0}\|\bD_n^{-1}\bTheta_n(\bSigma_j^{x}-\bGamma_j^{x})\bTheta'_n\bD_n^{-1}\|_2
\stackrel{p}{\longrightarrow} 0 \, \, \hbox{and}\eeqn
\beqn \label{l6.3b} \max_{0\leq j\leq j_0}\|\bD_n^{-1}\bTheta_n(\hat{\bSigma}_j^{x}-\bGamma_j^{x})\bTheta'_n\bD_n^{-1}\|_2
\stackrel{p}{\longrightarrow} 0.\eeqn
\end{lemma}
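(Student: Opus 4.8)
The plan is to write $\hat{\bSigma}_j^{x}-\bGamma_j^{x}=(\bSigma_j^{x}-\bGamma_j^{x})+(\hat{\bSigma}_j^{x}-\bSigma_j^{x})$, where $\hat{\bSigma}_j^{x}=\bA'\hat{\bSigma}_j\bA={1\over n}\sum_{t=1}^{n-j}(\bx_{t+j}-\bar{\bx})(\bx_t-\bar{\bx})'$ also carries the two off-diagonal blocks coupling $\bx_{t1}$ and $\bx_{t2}$, whereas $\bSigma_j^{x}$ is block-diagonal by construction. It therefore suffices to establish (\ref{l6.3a}) and, separately, $\|\bD_n^{-1}\bTheta_n(\hat{\bSigma}_j^{x}-\bSigma_j^{x})\bTheta_n'\bD_n^{-1}\|_2\stackrel{p}{\longrightarrow}0$, whereupon (\ref{l6.3b}) follows by the triangle inequality. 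Since $j_0$ is fixed, $\max_{0\le j\le j_0}$ is a maximum over finitely many indices, so I would argue for a single fixed $j$ throughout.

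For (\ref{l6.3a}) under Condition 1 I split the block-diagonal difference into its $(2,2)$ and $(1,1)$ blocks. The $\bx_{t2}$-rows and columns of $\bTheta_n$ and $\bD_n$ are identities, so the normalized $(2,2)$ block is exactly ${1\over n}\sum_{t=1}^{n-j}(\bx_{t+j,2}-\bar{\bx}_2)(\bx_{t2}-\bar{\bx}_2)'-\mathrm{Cov}(\bx_{1+j,2},\bx_{1,2})$, which is $o_p(1)$ in $\|\cdot\|_2$ by Condition 1(ii). For the $(1,1)$ block set $\bv_t=\bD_{n1}^{-1}\mathrm{diag}(\bB_1,\dots,\bB_l)(\bx_{t1}-\bar{\bx}_1)$; by Lemma 7 and the continuous mapping theorem $\bv_{[n\cdot]}$ converges weakly to $\bM(\cdot)$, so $\max_{t\le n}\|\bv_t\|=O_p(1)$, and the normalized difference of the $(1,1)$ blocks equals ${1\over n}\sum_{t=1}^{n-j}(\bv_{t+j}-\bv_t)\bv_t'-{1\over n}\sum_{t=n-j+1}^{n}\bv_t\bv_t'$. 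The second term has $\|\cdot\|_2$ of order $j/n$, and the first is negligible because on each integration-order group $i$ the vector $\bv_{t+j}-\bv_t$ is an $n^{a_i-1/2}$-scaled (on the trend coordinate, $n^{a_i}$-scaled) sum of only $j$ increments $\nabla x^k_{\cdot}$, $k$ in group $i$, each an $I(a_i-1)$ process, hence $o_p(1)$ uniformly in $t$ (of order $n^{-1}$ when $a_i\ge2$, of order $n^{1/(2\gamma)-1/2}$ when $a_i=1$). For the companion estimate only the $(1,2)$ and $(2,1)$ blocks survive, namely ${1\over n}\sum_t\bv_{t+j}(\bx_{t2}-\bar{\bx}_2)'$ and ${1\over n}\sum_t(\bx_{t+j,2}-\bar{\bx}_2)\bv_t'$; decomposing $\bv_t$ into its de-trended coordinates and its deterministic-trend coordinates, the de-trended contribution is $o_p(1)$ entrywise by (\ref{l7b}) (which, with $j$ fixed, extends to lags $0\le j\le j_0$ since replacing $x_t^i$ by $x_{t+j}^i$ only adds the lower-order increment already controlled), while the trend coordinate of group $i$ contributes, up to bounded constants, ${1\over n}\sum_t H^{a_i}(t/n)(\bx_{t2}-\bar{\bx}_2)'$, which is $O_p(n^{-1/2})$ by summation by parts using that $\sum_{t\le m}(\bx_{t2}-\bar{\bx}_2)=O_p(n^{1/2})$ uniformly in $m\le n$. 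With $p$ fixed, these entrywise statements give $\|\cdot\|_2$ convergence, which finishes Condition 1.

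Under Condition 2 with $p=o(n^{1/2-\tau})$ the decomposition is unchanged, but the scalar estimates must become operator-norm bounds on matrices of growing dimension, and here I would avoid sup-norm uniformity via Cauchy--Schwarz and second-moment bounds. Note first that $\mathrm{E}\bz_t=0$ kills all deterministic trends, so $\bTheta_n=\bI$ and $\bM=\bF$; then $\|{1\over n}\sum_t(\bv_{t+j}-\bv_t)\bv_t'\|_2\le({1\over n}\sum_t\|\bv_{t+j}-\bv_t\|^2)^{1/2}({1\over n}\sum_t\|\bv_t\|^2)^{1/2}$, where the variance bounds implicit in Condition 2(i) make the first factor $O_p((p/n)^{1/2})$ and Lemma 9 makes the second $O_p(1)$, while the edge term is $O_p(p/n)$; both are $o_p(1)$ since $p=o(n)$. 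For the off-diagonal blocks the independence of $\bz_t$ and $\bx_{t2}$ makes each entry a mean-zero sum whose variance is controlled by Condition 2(iii), $\sum_{s,t\le n}|\mathrm{E}(\varepsilon_s^j\varepsilon_t^j)|=O(n^{1+2\tau})$, so the entries are of $L_2$-size $O(n^{\tau-1/2})$ and the block has Frobenius (hence operator) norm $O(p\,n^{\tau-1/2})=o(1)$ under $p=o(n^{1/2-\tau})$; the $(2,2)$ block is again Condition 2(ii), and any remaining Gaussian comparison is supplied by the normal coupling of Condition 2(i), exactly as in the proof of Lemma 9, cf.\ (\ref{l9a}). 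The main obstacle, and the only genuinely delicate part of the argument, is precisely this last step: keeping the increment, cross-covariance, and (under Condition 1) summation-by-parts estimates uniform over the $p-r$ nonstationary coordinates so that the resulting operator-norm bounds still tend to zero under the sharp rate $p=o(n^{1/2-\tau})$.
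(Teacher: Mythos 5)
The paper states this result as Lemma 10 but relegates its proof (along with those of Lemmas 7--9) to supplementary material not included in the source, so there is no in-text argument to compare yours against; judged on its own terms, your proposal is essentially sound and follows what is surely the intended route. The key structural observations are all correct: the diagonal blocks of $\hat{\bSigma}_j^{x}$ and $\bSigma_j^{x}$ coincide exactly, so (\ref{l6.3b}) reduces to (\ref{l6.3a}) plus control of the normalized off-diagonal blocks; the $(2,2)$ block of (\ref{l6.3a}) is precisely Condition 1(ii) (resp.\ 2(ii)); the $(1,1)$ block reduces via the identity ${1\over n}\sum_{t\le n-j}\bv_{t+j}\bv_t'-{1\over n}\sum_{t\le n}\bv_t\bv_t'={1\over n}\sum_{t\le n-j}(\bv_{t+j}-\bv_t)\bv_t'-{1\over n}\sum_{t>n-j}\bv_t\bv_t'$ to controlling increments that are sums of $j\le j_0$ terms of one-lower integration order; and the off-diagonal blocks are killed by (\ref{l7b}) plus Abel summation on the trend coordinates under Condition 1, and by mean-zero second-moment bounds from independence and Condition 2(iii) under Condition 2, with the Frobenius-norm count $O_p(p\,n^{\tau-1/2})=o_p(1)$ correctly matched to the rate $p=o(n^{1/2-\tau})$. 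Two quibbles, neither fatal. First, under Condition 2 your Cauchy--Schwarz factor $({1\over n}\sum_t\|\bv_t\|^2)^{1/2}$ is not $O_p(1)$: it is the square root of the trace of a $(p-r)\times(p-r)$ matrix with $O_p(1)$ diagonal entries, hence $O_p(p^{1/2})$; the product with your first factor $O_p((p/n)^{1/2})$ is then $O_p(p\,n^{-1/2})$, which is still $o_p(1)$ under $p=o(n^{1/2-\tau})$, but in the growing-$p$ regime such traces are exactly where dimension accumulates and should be tracked honestly. Second, your uniform increment bound for $a_i=1$ invokes a moment exponent $\gamma>1$ that belongs to the sufficient conditions listed \emph{after} Condition 1 rather than to Condition 1 itself; this is repairable, since the $J_1$ convergence in Condition 1(i) to a continuous Gaussian limit already forces the maximal jump $\max_{t\le n}|z_t^k|/\sqrt{n}$ to vanish in probability, which is all your argument needs.
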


\noindent{\bf Proof of Theorem 1.}\quad Since
\beqn \{D(\hat{\mathcal{M}}(\bA_2), \mathcal{M}(\bA_2))\}^2&=&{1\over r}\{\mathrm{tr}[\bA'_2(I_p-\hat{\bA}_2\hat{\bA}_2^{'})\bA_2]\}\nn\\
&\leq &||\bA'_2(\bA_2\bA'_2-\hat{\bA}_2\hat{\bA}_2^{'})\bA_2||_2\leq 2 ||\hat{\bA}_2- \bA_2||_2^2,\nn\eeqn
it follows from Theorem I.5.5 of Stewart and Sun (1990) (see also Proposition 2.1 of Vu and Lei (2013)) that
\begin{equation} \label{6.27}D(\hat{\mathcal{M}}(\bA_2), \mathcal{M}(\bA_2))\leq \sqrt{2}||\hat{\bA}_2- \bA_2||_2\leq \sqrt{2}||\hat{\bA}_2- \bA_2||_F\leq 2\sqrt{2}||\sin\Theta(\hat{\bA}_2, \bA_2)||_F,\end{equation}
where $\Theta(\hat{\bA}_2, \bA_2)=\arccos[(\bA'_2\hat{\bA}_2\hat{\bA}_2\bA_2)^{1/2}]$ is the canonical angle between the column spaces of $\hat{\bA}_2$ and $\bA_2$. Let $\eta=\min_{\lambda\in \lambda(\bD_1^x), \, \mu\in \lambda(\tilde{\bD}_2^x)}|\lambda-\mu|/\sqrt{\lambda\mu},$ where $\lambda(\tilde{\bD}_2^x)$ consists of  the $r$ smallest   eigenvalues of $\bA'\hat{\bW}\bA=:\hat{\bW}^x.$
By Theorem 2.4 of Dopico, Moro and Molera (2000), we have
\beqn \label{6.28}||\sin\Theta(\hat{\bA}_2, \bA_2)||_F\leq ||(\bW^{y})^{-1/2}\Delta \bW^y (\hat{\bW})^{-1/2}||_F/\eta.\eeqn
Note that
\begin{equation} \label{6.29}(\bW^{y})^{-1/2}\Delta \bW^y (\hat{\bW})^{-1/2}
=
(\bW^{y})^{-1/2}(\hat{\bW})^{1/2}-(\bW^{y})^{1/2}(\hat{\bW})^{-1/2}.\end{equation}
Thus, by equations (\ref{6.27}), (\ref{6.28}) and (\ref{6.29}), we have
\beqn \label{6.30} D(\hat{\mathcal{M}}(\bA_2), \mathcal{M}(\bA_2))\leq
(||(\bW^{y})^{-1/2}(\hat{\bW})^{1/2}||_F+||(\bW^{y})^{1/2}(\hat{\bW})^{-1/2}||_F)/\eta.\nn\eeqn

Next, we show that$||(\bW^{y})^{-1/2}(\hat{\bW})^{1/2}||_F=O_p(1),$
 which is equivalent  to
 \beqn \label{6.30a} ||(\bW^{x})^{-1/2}(\hat{\bW}^{x})^{1/2}||_F=O_p(1).\eeqn
Note that
\begin{equation} \label{6.31} 0\leq \bSigma_0^{x}\leq (\bW^{x})^{1/2}\leq \sum_{j=0}^{j_0} \{\bSigma_j^{x}(\bSigma_j^{x})'\}^{1/2} \, \, \, \hbox{and} \, \, \, 0\leq \hat{\bSigma}_0^{x}\leq (\hat{\bW}^{x})^{1/2}\leq \sum_{j=0}^{j_0} \{\hat{\bSigma}_j^{x}(\hat{\bSigma}_j^{x})'\}^{1/2}.\end{equation}
It follows from (\ref{6.31})  that
\beqn ||(\bW^{x})^{-1/2}(\hat{\bW}^{x})^{1/2}||_F\leq \sum_{j=0}^{j_0} ||(\bSigma_0^{x})^{-1} \{\hat{\bSigma}_j^{x}(\hat{\bSigma}_j^{x})'\}^{1/2}||_F.\nn\eeqn
Thus, for (\ref{6.30a}), it is enough to show the eigenvalues of $(\bSigma_0^{x})^{-1}\sum_{j=0}^{j_0} \{\hat{\bSigma}_j^{x}(\hat{\bSigma}_j^{x})'\}^{1/2}$ are $O_p(1)$, which is equivalent to
\beqn\label{6.34} \hbox{the solutions} \, \,  \lambda \, \,\hbox{of} \, \,  | \{\hat{\bSigma}_j^{x}(\hat{\bSigma}_j^{x})'\}^{1/2}-\lambda \bSigma_0^{x}|=0 \, \, \hbox{are} \, \,O_p(1).\eeqn

 Since $\mathrm{diag}\left(\int_{0}^{1}\bold{M}(t)\bold{M}'(t)\, dt, \, \mathrm{Var}(\bold{x}_{1,2})\right)>0$, by  Lemma 10  the  solutions ($\lambda$) of equation
\beqn \label{6.35}|\bD_n^{-1}\bTheta_n\{\hat{\bSigma}_j^{x}(\hat{\bSigma}_j^{x})'\}^{1/2}
\bTheta'_n\bD_n^{-1}-\lambda \bD_n^{-1}\bTheta_n\bSigma_0^{x}\bTheta'_n\bD_n^{-1}|=0\eeqn
are bounded in probability. Thus, we have (\ref{6.34}) and  (\ref{6.30a}) as desired.

Similarly, we can show
\beqn\label{6.37}||(\bW^{y})^{1/2}(\hat{\bW})^{-1/2}||_F=
||(\bW^{x})^{1/2}(\hat{\bW}^{x})^{-1/2}||_F=O_p(1).\eeqn
 Using equations (\ref{6.31}) and (\ref{6.37}), the remainder of the proof of  Theorem 1 consists of showing that there exist two positive constants $c_1, c_2$ such that in probability $\eta\geq c_1n^{2a_1-1}/\sqrt{j_0}$ provided
$|I_0|\geq 2$ or $ |I_0|=1$ and
$\mathrm{E}z_t^{I_0}= 0$ and $\eta\geq c_2n^{2a_1}/\sqrt{j_0} $ provided $ |I_0|=1$ and
$\mathrm{E}z_t^{I_0}\neq 0.$

 Define $\lambda_i(\bA)$ to be the $i$-th eigenvalue of a matrix $\bA$.
  Note that
  $$\mathrm{diag}\left(\int_{0}^{1}\bold{M}(t)\bold{M}'(t)\, dt, \, \mathrm{Var}(\bold{x}_{1,2})\right)>0.$$
By Lemmas 8 and 10, it follows  that  when $|I_0|\geq 2$ or $ |I_0|=1$ and
$\mathrm{E}z_t^{I_0}= 0$,
$\lambda_{p-r}(\bSigma_j^x)=O_{e}(n^{2a_1-1})$ and $\lambda_{p-r+1}(\hat{\bSigma}_j^x)=O_{e}(1)$. Thus,
there exist two positive constants $c_3, c_4$ such that in probability
\begin{equation} \label{6.38}\lambda_{p-r}(\bW^x)\geq
\lambda_{p-r}(\bSigma_0^x(\bSigma_0^x)')\geq c_3n^{2(2a_1-1)}\end{equation}
and
\begin{equation}\label{6.39} c_3\leq\lambda_{p-r+1}(\hat{\bSigma}_0^x(\hat{\bSigma}_0^x)')\leq\lambda_{p-r+1}(\hat{\bW}^x)
\leq \Big[\lambda_{p-r+1}\Big(\sum_{j=0}^{j_0} \{\hat{\bSigma}_j^{x}(\hat{\bSigma}_j^{x})'\}^{1/2}\Big)\Big]^2\leq c_4j_0^2.\end{equation}
Hence,  in probability
 \beqn\label{6.40} \eta\geq |c_3n^{2(2a_1-1)}-c_4 j_0^2|/\sqrt{c_3n^{2(2a_1-1)}c_4 j_0^2}\geq c'n^{2a_1-1}/j_0.\nn\eeqn
Similarly, we have  $ |I_0|=1$ and
$\mathrm{E}z_t^{I_0}\neq 0$, then in probability,
\beqn \label{6.40a} \eta\geq  c'n^{2a_1}/j_0.\eeqn
Since $j_0$ is fixed, combining  (\ref{6.30a}), (\ref{6.40}) and (\ref{6.40a}), we complete the proof of (i) and (ii). Conclusion (iii) can be shown similarly by treating $A_{1i}$ as the role of $A_2$, see also the proof of Theorem 1 of Chen and Hurvich (2006), we omit the details here.
\hfill$\Box$

\vskip3mm

Let $\bA_{1,0}=\bA_2$ and $\wh \bB_{1i}= (\wh\gamma_{\nu_i+1}, \cdots, \wh\gamma_{\nu_i+r_i})$ for $i=1, \cdots, q$ and
$\wh\bB_{10}=(\wh\gamma_{p-r+1}, \cdots, \wh\gamma_{p})$.

\begin{lemma}\label{lem11} Under Condition 1, we have
\beqn \|\bB_{1,l}\bA_{1,h}\|_{F}=O_p(n^{-2|a_h-a_l|}), \, \, \hbox{for} \, \, l\neq h.\nn\eeqn
\end{lemma}

\begin{proof}  Let $\eta(\bB_{1,l}, \bA_{1,h})$ be defined as $\eta$  above, i.e.,
\beqn  \eta(\bB_{1, l}, \bA_{1,h})=\min_{\lambda\in \{\wh\lambda_{\nu_l+1}, \cdots, \wh\gamma_{\nu_l+r_l}\}, \, \mu\in \{\lambda_{\nu_h+1}, \cdots, \lambda_{\nu_h+r_h}\}}|\lambda-\mu|/\sqrt{\lambda\mu}.\nn\eeqn
By Lemmas 8 and 10, using the same arguments as  in Theorem 1, we have
\beqn \eta(\bB_{1,l}, \bA_{1,h})\geq cn^{2|a_h-a_l|}\eeqn
for some $c>0$.
It has been shown in Theorem 1 that $||(\bW^{y})^{-1/2}\Delta \bW^y (\hat{\bW})^{-1/2}||_F=O_p(1)$,  thus by
Theorem 2.4 of Dopico, Moro and Molera (2000) (see also Theorem 4.1 of Barlow and Slapni\u{c}ar (2000)), we have
\beqn \|\bB_{1,l}\bA_{1,h}\|_{F}&\leq& ||(\bW^{y})^{-1/2}\Delta \bW^y (\hat{\bW})^{-1/2}||_F/\eta(\bB_l, \bA_h)\nn\\
&=&O_p(n^{-2|a_h-a_l|}).\nn\eeqn
This completes the proof of Lemma\ref{lem11}.
\end{proof}

\vskip3mm

\noindent{\bf Proof of Theorem 2.}\quad First, we prove the consistency of $\hat{r}$. For any
$1\leq i\leq p$,
\beqn\label{7.18} \wh x_{t}^{i}=\wh\gamma'_i\by_t
=(\wh\gamma'_i\bA_{1q}\bx_{t1q}, \cdots, \wh\gamma'_i\bA_{11}\bx_{t11}, \wh\gamma'_i\bA_{2}\bx_{t2}).\eeqn
Let
$\nu_i$ be defined as in Lemma 7 and $r_0=r.$ By Lemma 11,
when $\nu_l+1\leq i\leq \nu_l+r_l, \, l\neq h$,
\beqn \wh\gamma'_i\bA_{1h}=O_p(n^{-2|a_h-a_l|}).\nn\eeqn
Thus, by $\sup_{1\leq t\leq n}|\bx_{t1h}|=O_p(n^{a_h-1/2})$ for $h\geq 1$ (see Lemma 7), we have
\beqn \wh\gamma'_i\bA_{1h}\bx_{t1h}=O_p(n^{-a_h+2a_l-1/2})I(h>l)
+O_p(n^{-2a_l+3a_h-1/2})I(1\leq h<l).\nn\eeqn
As a result, by (\ref{7.18}), it follows that for  any $\nu_l+1\leq i\leq \nu_l+r_l$,
\beqn \quad\wh x_{t}^{i}&=& \wh\gamma'_i\bA_{1l}\bx_{t1l}
+O_p(\sum_{h=l+1}^{q}n^{-a_h+2a_l-1/2}+\sum_{h=1}^{l}n^{-2a_{l}+3a_{l-1}-1/2})\nn\\
&=&\wh\gamma'_i\bA_{1l}\bx_{t1l}+O_p(n^{-a_{l+1}+2a_l-1/2}+n^{-2a_{l}+3a_{l-1}-1/2}),\nn\eeqn
where $\bx_{t10}=\bx_{t2}.$ Thus, for  any  given $m$, we have
\beqn \label{7.19} &&\sum_{k=1}^{m}\Big({1\over {n-k}}\sum_{t=1}^{n-k}(\wh x_{t+k}^{ i}-\overline{\wh x}^{i}) (\wh x_{t, i}-\overline{\wh x}^{i})\Big)\nn\\
&=&{\wh\gamma'_i\bA_{1l}\over {n-k}}\sum_{k=1}^{m}\sum_{t=1}^{n-k}(\bx_{t+k, 1l}-\overline{\bx}_{1l})(\bx_{t1l}-\overline{\bx}_{1l})'\bA'_{1l}\wh\gamma_i(1+o_p(1)).\eeqn
By (\ref{7.19}), we have that for any $\nu_l+1\leq i\leq \nu_l+r_l, \, l=1, \cdots, q$
\beqn\label{7.20}&&\sum_{k=1}^{m}\Big({1\over {n-k}}\sum_{t=1}^{n-k}(\wh x_{t+k}^{ i}-\overline{\wh x}^{i}) (\wh x_{t, i}-\overline{\wh x}^{i})\Big)\nn\\
&=&m\wh\gamma'_i\bA_{1l}\Big({1\over n}\sum_{t=1}^{n}(\bx_{t1l}-\overline{\bx}_{1l})(\bx_{t1l}-\overline{\bx}_{1l})'\Big)
\bA'_{1l}\wh\gamma_i(1+o_p(1))
=O_e(mn^{2a_l-1}).\eeqn
On the other hand,  by (\ref{7.19}) and $\|\sum_{k=1}^{m}{1\over {n-k}}\sum_{t=1}^{n-k}(\bx_{t+k, 2}-\overline{\bx}_2)(\bx'_{t, 2}-\overline{\bx}_2)\| \leq C$ in probability, it follows that for $p-r+1\le i\le p$,
\beqn \label{7.21}\sum_{k=1}^{m}\Big({1\over {n-k}}\sum_{t=1}^{n-k}(\wh x_{t+k}^{ i}-\overline{\wh x}^{i}) (\wh x_{t, i}-\overline{\wh x}^{i})\Big)=O_p(1).\eeqn
Equation (\ref{7.20}) together with (\ref{7.21}) yields the conclusion  of Theorem 2 as desired.\hfill$\Box$

\ignore{ Next, we show (b), by the definition of $\wt r$, we have
\beqn \label{6.41add}\sum_{j=1}^{\tilde{r}}\hat{\lambda}_{p+1-j}+(p-\tilde{r}) \omega_n \leq
\sum_{j=1}^{r}\hat{\lambda}_{p+1-j}+(p-r) \omega_n.\eeqn
Suppose that $\tilde{r}<r$, it follows from (\ref{6.41add}) that
\beqn \label{6.42}(r-\tilde{r}) \omega_n\leq \sum_{j=\tilde{r}+1}^{r}\hat{\lambda}_{p+1-j}\leq (r-\tilde{r}) \hat{\lambda}_{p+1-r}.\eeqn
However equation (\ref{6.39}) implies that in probability,
$$\hat{\lambda}_{p+1-r}=\lambda_{p+1-r}(\bA\hat{\bW}^x\bA')=\lambda_{p+1-r}(\hat{\bW}^x)\leq c_4j_0^2.$$
Since $\omega_n/j_0^2\rightarrow \infty$, it follows that equation (\ref{6.42}) holds with probability zero. This gives that
\beqn \label{6.43}\lim_{n\rightarrow\infty}P\{\tilde{r}<r\}=0.\eeqn

On the other hand, if $\tilde{r}>r$, equation (\ref{6.41add}) yields
\beqn \label{6.44}(\tilde{r}-r) \hat{\lambda}_{p-r}\leq\sum_{j=r+1}^{\tilde{r}}\hat{\lambda}_{p+1-j}\leq  (\tilde{r}-r) \omega_n.\eeqn
By (\ref{6.38}), we have when $|I_0|\geq 2$ or $ |I_0|=1$ and
$\mathrm{E}z_t^{I_0}= 0$,
\beqn \label{6.45}\hat{\lambda}_{p-r}=\lambda_{p+1-r}(\hat{\bW}^x)\geq c_3n^{2(2a_1-1)}.\eeqn
A similar argument to (\ref{6.38}) deduces when $ |I_0|=1$ and
$\mathrm{E}z_t^{I_0}\neq 0$,
\beqn \label{6.46}\hat{\lambda}_{p-r}=\lambda_{p+1-r}(\hat{\bW}^x)\geq c'_3n^{4a_1}.\eeqn
Since $\omega_n/n^{2(2d-1)}\rightarrow 0$ as $n\rightarrow\infty$, equations (\ref{6.44})--(\ref{6.46}) imply
\beqn \label{6.47}\lim_{n\rightarrow\infty} P\{\tilde{r}>r\}=0.\eeqn}

\subsection{Proofs for Section \ref{sec32}}

\noindent{\bf Proof of Theorems 3 and 4.}\quad
Theorem 3 can be shown  similarly to Theorem 1 by using Lemma 9   instead of Lemma 8, except that
when $p\rightarrow\infty$,
\beqn ||(\bSigma_0^{x})^{-1} \{\hat{\bSigma}_j^{x}(\hat{\bSigma}_j^{x})'\}^{1/2}||_F
=O_p\left(\left(\sum_{i=1}^{p}(\tilde{\lambda}_i)^2\right)^{1/2}\right)
=O_{p}(p^{1/2}),\nn\eeqn
where $\tilde{\lambda}_i, 1\leq i\leq p$ are solutions of (\ref{6.34}).
As a result, (\ref{6.30a}) should be replaced by
\begin{equation}\label{6.50}||(\bW^{y})^{-1/2}(\hat{\bW})^{1/2}||_F=O_p(p^{1/2}) \, \, \hbox{and} \, \, ||(\bW^{x})^{-1/2}(\hat{\bW}^{x})^{1/2}||_F=O_p(p^{1/2}).\end{equation}

Theorem 4 can be shown similarly to Theorem 2. We omit the details. \hfill$\Box$

\subsection{Proofs for Section \ref{sec5}}

To prove Theorems 5 and 6, we first introduce some notation. Let $ k_{ni}=n^{d_i-1/2}I(d_i>1/2)+n^{d_i+1/2}I(d_i<1/2)$ and $\lambda_i(t- s)=(t-s)^{d_i-1}/\Gamma(d_i)I(d_i>1/2)+(t-s)^{d_i}/\Gamma(d_i+1)I(d_i<1/2)$. Define
$\bK_n=\mathrm{diag}(k_{n1}, \cdots, k_{np}), \, \bLambda(t, s)=\mathrm{diag}(\lambda_1(t-s), \cdots, \lambda_p(t-s))$ and
\beqn \bold{B}_0=0, \, \, \bold{B}_t=(B_t^1, \cdots, B_t^p)'=\int_{0}^{t}\bLambda(t, s)\, d\bold{W}_s, \, \, \bold{U}_t=\bold{B}_t-\int_{0}^{1}\bold{B}_t\, dt,\nn\eeqn
where $\bold{W}_s$ is given in (ii) of Condition 3. Let $\nabla^{d_l}v_{t}^l=\mu_l, \, I_1^{c}=\{i: d_i>1/2\}$ and $\bx_{t, I}=(x_t^i, \, i\in I)'$ and $\bv_{t, I}=(v_t^i, \, i\in I)'$

\vskip3mm

\begin{lemma} Let  $\bZ_n(t)=((\bx_{[nt], I_1^c}-\bv_{[nt], I_1^c})', \sum_{j=1}^{[nt]}(\bx_{j, I_1}-\bv_{j, I_1})')'$. Under (ii) of Condition 3,
\beqn \bK_n^{-1}\bZ_n(t)\stackrel{J_1}{\Longrightarrow} \bold{B}_t, \, \, \hbox{on}\, \, D[0, 1]^{p}.\eeqn
\end{lemma}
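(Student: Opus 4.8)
The plan is to treat the two groups of coordinates of $\bZ_n$ uniformly: rewrite each coordinate as a fractional filter applied to the $I(0)$ innovations $\bve_k$, use the strong approximation of Condition 3(ii) to replace these innovations by the Gaussian ones $\bold{w}_k$, identify the Gaussian limit by a direct covariance computation together with a routine tightness estimate, and finally transfer the conclusion back from the process built on $\{\bold{w}_k\}$ to $\bZ_n$, the approximation error being uniformly negligible and the limit having continuous paths.

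First I would set up the representations. For an index $i$ with $d_i>1/2$ the $i$-th coordinate of $\bZ_n(t)$ is $x^i_{[nt]}=\sum_{k=1}^{[nt]}a_{[nt]-k}(d_i)\varepsilon^i_k$, while for $i$ with $d_i<1/2$ it is $\sum_{j=1}^{[nt]}x^i_j=\sum_{k=1}^{[nt]}\bigl(\sum_{l=0}^{[nt]-k}a_l(d_i)\bigr)\varepsilon^i_k=\sum_{k=1}^{[nt]}a_{[nt]-k}(d_i+1)\,\varepsilon^i_k$, using the generating-function identity $\sum_{l=0}^{L}a_l(\alpha)=a_L(\alpha+1)$. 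Writing $\alpha_i=d_i$ when $d_i>1/2$ and $\alpha_i=d_i+1$ when $d_i<1/2$, we have $\alpha_i>1/2$ in every case, $k_{ni}=n^{\alpha_i-1/2}$, $\lambda_i(t-s)=(t-s)^{\alpha_i-1}/\Gamma(\alpha_i)$, and the $i$-th coordinate of $\bK_n^{-1}\bZ_n(t)$ equals $Z^i_n(t)=k_{ni}^{-1}\sum_{k=1}^{[nt]}a_{[nt]-k}(\alpha_i)\,\varepsilon^i_k$. From $a_L(\alpha)\sim L^{\alpha-1}/\Gamma(\alpha)$ one gets $k_{ni}^{-1}a_{[nt]-k}(\alpha_i)=n^{-1/2}\lambda_i(t-k/n)\{1+o(1)\}$, uniformly for $k\le(1-\delta)[nt]$, each fixed $\delta>0$.

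Next I would pass to Gaussian innovations. Summation by parts, using $a_L(\alpha)-a_{L-1}(\alpha)=a_L(\alpha-1)$ and $a_0(\alpha)=1$, gives $\sum_{k=1}^{m}a_{m-k}(\alpha)r_k=\sum_{k=1}^{m-1}a_{m-k}(\alpha-1)R_k+R_m$ with $R_k=\sum_{j=1}^{k}r_j$. Applying this with $r_k=\varepsilon^i_k-w^i_k$ and setting $\rho_n:=\sup_{0\le t\le1}\|\bold{S}_n(t)-\sum_{l\le[nt]}\bold{w}_l\|_2$, which is $o_p(n^{1/s})$ by Condition 3(ii), we have $|R_k|\le\rho_n$ for $k\le n$, so the error from replacing $\varepsilon^i$ by $w^i$ in $Z^i_n(t)$ is at most $k_{ni}^{-1}\rho_n\bigl(1+\sum_{L=1}^{n}|a_L(\alpha_i-1)|\bigr)$. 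Since $\sum_{L=1}^{M}|a_L(\alpha-1)|$ is $O(1)$ when $\alpha<1$ and $O(M^{\alpha-1})$ when $\alpha\ge1$, and $M\le n$, the normalized error is $o_p(n^{1/s-1/2})$ in every case except the nonstationary coordinates with $1/2<d_i<1$, where it is $o_p(n^{1/s-(d_i-1/2)})$. As $s>2$ and, by Condition 3(i) (which forces $q>2/(2d_{\min}-1)$, so $s$ may be taken with $1/s<d_{\min}-1/2$), all these errors are $o_p(1)$, uniformly in $t$. Hence $\bK_n^{-1}\bZ_n(t)=\wt{\bZ}_n(t)+o_p(1)$ in sup-norm on $[0,1]$, where $\wt Z^i_n(t)=k_{ni}^{-1}\sum_{k=1}^{[nt]}a_{[nt]-k}(\alpha_i)w^i_k$.

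Finally I would prove $\wt{\bZ}_n\stackrel{J_1}{\Longrightarrow}\bold{B}$ on $D[0,1]^p$. The process $\wt{\bZ}_n$ is centred Gaussian, so finite-dimensional convergence reduces to convergence of covariances: with $\sigma_{ij}=\mathrm{E}(w^i_1w^j_1)$,
\[
\mathrm{Cov}\bigl(\wt Z^i_n(s),\wt Z^j_n(t)\bigr)=\sigma_{ij}\,k_{ni}^{-1}k_{nj}^{-1}\sum_{k=1}^{[n(s\wedge t)]}a_{[ns]-k}(\alpha_i)\,a_{[nt]-k}(\alpha_j)\;\longrightarrow\;\sigma_{ij}\int_0^{s\wedge t}\lambda_i(s-u)\,\lambda_j(t-u)\,du ,
\]
the passage to the limit being a Riemann-sum argument in which the singularities $(s-u)^{d_i-1}$, integrable because $d_i>1/2$, are handled by dominated convergence after truncating a neighbourhood of $u=s\wedge t$; and the right-hand side is exactly $\mathrm{Cov}(B^i_s,B^j_t)$ by the isometry for Wiener integrals, since $\bold{W}$ has increments with covariance $(\sigma_{ij})$. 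For tightness, Gaussianity gives $\mathrm{E}|\wt Z^i_n(t)-\wt Z^i_n(s)|^{2\ell}=c_\ell\bigl(\mathrm{Var}(\wt Z^i_n(t)-\wt Z^i_n(s))\bigr)^{\ell}$, and the estimates $a_L(\alpha)\sim L^{\alpha-1}/\Gamma(\alpha)$ together with $|a_{M+h}(\alpha)-a_M(\alpha)|\lesssim\min\{a_M(\alpha),\,h\,|a_{M+1}(\alpha-1)|\}$ yield $\mathrm{Var}(\wt Z^i_n(t)-\wt Z^i_n(s))\le C|t-s|^{2\alpha_i-1}$ uniformly in $n$; choosing $\ell$ with $\ell(2\alpha_i-1)>1$ gives the moment criterion for tightness in $D[0,1]$, hence in $D[0,1]^p$. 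Since $\bold{B}$ has continuous sample paths, this gives convergence in the $J_1$ topology, and combining with the uniform approximation of the previous step yields $\bK_n^{-1}\bZ_n\stackrel{J_1}{\Longrightarrow}\bold{B}$.

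The main obstacle is the error control in the second step: it must hold uniformly in $t\in[0,1]$ while the normalization is kept sharp, which forces one through the fractional-calculus bounds on $a_L(\alpha)$ and is precisely where the moment requirement $q>2/(2d_{\min}-1)$ of Condition 3(i) — equivalently $1/s<d_{\min}-1/2$ — is indispensable, the coarser bound $o_p(n^{1/s-1/2})$ being too weak for the slowly diverging nonstationary coordinates $1/2<d_i<1$. Once those coefficient estimates are in hand, the covariance computation and the tightness criterion are routine, and the argument essentially reproduces the weak-convergence theory for (type II) multivariate fractional processes of Marinucci and Robinson (2000).
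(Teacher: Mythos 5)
Your proof is correct and follows essentially the same route as the paper: the paper's own proof is a two-line reduction to Theorem 1 of Marinucci and Robinson (2000) — observing that the partial sums of the $d_i<1/2$ coordinates are type-II fractional of order $d_i+1>1/2$ and substituting Condition 3(ii) for their Lemma 2 — and your argument is a detailed reconstruction of exactly that proof (same unification of coordinates via $\sum_{l\le L}a_l(\alpha)=a_L(\alpha+1)$, same summation-by-parts transfer to Gaussian innovations, same covariance/tightness identification of the limit). Your observation that the uniform error control for coordinates with $1/2<d_i<1$ needs $1/s<d_{\min}-1/2$, hence implicitly the moment bound of Condition 3(i) rather than 3(ii) alone, is a fair and accurate reading of where the paper's hypotheses actually enter.
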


\begin{proof} Let $d_{I_1}=\{d_i: \, i\in I_1\}$, then $\sum_{j=1}^{[nt]}\bx_{j, I_1}$ is an integrated fractional process with order $d_{I_1}+1$, and each of its components has order larger than $1/2.$ Using (ii) of Condition 3 instead of  Marinucci and Robinson (2000) Lemma 2, we can show this lemma similarly to  their Theorem 1.
\end{proof}

Let $\bTheta_n$ and  $\bold{M}_i(t)$ be defined as that after Lemma 7 by using
 $H^{d}(t)= {t^{d}/ \Gamma(d_i+1)}- {1/ \Gamma(d_i+2)}$ and
  $\bF_i(t)=(U^{\nu_i+1}(t),  \cdots, U^{\nu_i+r_i}(t))', \,$ where
  $U^{i}_t$ be the $i$-th component of $\bold{U}_t.$ Let $\bL_n=\mathrm{diag}(l_{n1}, \cdots, l_{np}), \, \, l_{ni}=n^{d_i-1/2}I(d_i>1/2)+I(d_i<1/2)$.
  Similar to Lemma 8, by
  Lemma 12 and the continuous mapping theorem, we have the following lemma.

\begin{lemma}
 Let the conditions of Theorem 5 hold. Then the following assertions hold
  for any $0\le j\le j_0$.
\begin{itemize}
\item[(i)] If $\delta>1/2$, then
\beqn &&\bL_n^{-1}\bTheta_n\hat{\bSigma}_j^{x}\bTheta'_n\bL_n^{-1}
\stackrel{d}{\longrightarrow}\int_{0}^{1}(\bold{M}'_t, \bold{U}'_{t2})'(\bold{M}'_t, \bold{U}'_{t2})\, dt,  \quad  \hbox{and} \nn\\
&&\bL_n^{-1}\bTheta_n\bSigma_j^{x}\bTheta'_n\bL_n^{-1}\stackrel{d}{\longrightarrow}\mathrm{diag}\Big(\int_{0}^{1}
\bold{M}_{t}\bold{M}'_{t}\, dt, \, \int_{0}^{1}
\bold{U}_{t, 2}\bold{U}'_{t, 2}\, dt\Big),\nn\eeqn
 where $\bold{U}_{t, 2}$ is corresponding to the  last $p$ components of $\bold{U}_{t}$.

\item[(ii)] If $\delta<1/2$, then
\begin{equation} \label{7.83} \bL_n^{-1}\bTheta_n\hat{\bSigma}_j^{x}\bTheta'_n\bL_n^{-1}\stackrel{d}{\longrightarrow} \mathrm{diag}\left(\int_{0}^{1}\bold{M}_{t}\bold{M}'_{t} \, dt,  \, \, \mathrm{Cov}(\bx_{t+j, I_1}\bx_{t, I_1})\right),\end{equation}
and
\begin{equation}\label{7.84}\bL_n^{-1}\bTheta_n\bSigma_j^{x}\bTheta'_n\bL_n^{-1}\stackrel{d}{\longrightarrow} \mathrm{diag}\left(\int_{0}^{1}\bold{M}_{t}\bold{M}'_{t} \, dt,  \, \, \mathrm{Cov}(\bx_{t+j, I_1}\bx_{t, I_1})\right).\end{equation}
\end{itemize}
\end{lemma}

\vskip3mm
By Lemma 13, Theorems 5 and 6 can  be established
in a similar manner as to Theorems 1 and 2.
Therefore we omit the detailed proofs.

\end{document}